\theoremstyle{plain}
\newtheorem{theorem}{Theorem}[section]
\newtheorem{lemma}[theorem]{Lemma}
\theoremstyle{definition}
\newtheorem{definition}[theorem]{Definition}
\newtheorem{notation}[theorem]{Notation}
\newcommand{\CL}{\ensuremath{\textrm{CL}}}
\newcommand{\vfcx}{VeriFast Cx}
\newcommand{\cbsem}{\texttt{cbsem}}
\newcommand{\coderef}[1]{\texttt{\detokenize{#1}}}
\author[1]{Stefan Wils}
\author[2]{Bart Jacobs}
\affil[1,2]{IMEC-DistriNet, Computer Science Dept., KU Leuven\\Celestijnenlaan 200A, 3001 Leuven, Belgium}
\affil[1]{{\footnotesize{stefan.wils@kuleuven.be}}}
\title{Certifying C program correctness with respect to CompCert with VeriFast}
\newcommand{\vfurl}{\url{https://doi.org/10.5281/zenodo.5585276}}
\begin{document}

\maketitle

\begin{abstract}
  VeriFast is a powerful tool for verification of various correctness properties
  of C programs using symbolic execution. However, VeriFast itself has not been
  verified. We present a proof-of-concept extension which generates a
  correctness certificate for each successful verification run individually.
  This certificate takes the form of a Coq script containing two proofs
  which, when successfully checked by Coq, together remove the need for trusting
  in the correctness of VeriFast itself.

  The first proves a lemma expressing the correctness of the program with
  respect to a big step operational semantics developed by ourselves, intended
  to reflect VeriFast's interpretation of C. We have formalized this semantics
  in Coq as \cbsem. This lemma is proven by symbolic execution in Coq, which in
  turn is implemented by transforming the exported AST of the program into a Coq
  proposition representing the symbolic execution performed by VeriFast itself.

  The second proves the correctness of the same C program with respect to
  CompCert's Clight big step semantics. This proof simply applies our proof of
  the soundness of \cbsem\ with respect to CompCert Clight to the first proof.
\end{abstract}

\tableofcontents

\section{Introduction}

VeriFast is a general verification tool for programs written in C
\cite{jacobs-vf-tutorial-2017}. If VeriFast reports ``0 errors found'', this is
intended to imply that no execution of the C program has undefined behavior,
such as accesses of unallocated memory or data races.

VeriFast performs its verification in a modular fashion by symbolically
executing each function individually. Whole program correctness is checked by
verifying that each function call satisfies the preconditions for that function,
which are provided by user annotations. VeriFast is written in OCaml and relies
on an SMT solver for checking the assertions which allow it to conclude that an
input program is correct with regards to its specification.

An open problem of course is that desirable qualities in a verification tool
such as speed, ease of use and realism in supported language features increase
the complexity of the verifier and grow the size of the code base which needs to
be trusted. This makes verification of VeriFast itself challenging and prevents
us from being sure that it is sound. So the question ultimately is:
\emph{what can the user really conclude when VeriFast reports "0 errors found"?}

This simple question involves more than ``merely'' straightforward
implementation bugs within VeriFast. When VeriFast concludes that a program
meets its specification, it has also made a large number of assumptions about
the semantics of C which are not explicitly provided by the user's annotations.

In effect, the sum of these assumptions results in a version of C specific to
VeriFast. We need to trust that this ``VeriFast C'' relates meaningfully to the
version of C implemented by a particular compiler. In other words, VeriFast C
should accept a usable but rather strict subset of programs that have defined
behavior according to the ISO C standard and especially of the subset of
programs which are accepted by this compiler.

More broadly, VeriFast relies on an implicit metatheory which must be assumed
correct when trusting its conclusions. To provide a simple example of such a
``metalemma'': VeriFast concludes that the result of a division of two
\vfcinline{int}s $n$ and $d$ is itself a valid \vfcinline{int} and not
out-of-bounds ($\vfMinSigned \leq n / d \leq \vfMaxSigned$) if: $n$ and $d$
themselves are not out-of-bounds; $d \neq \mathtt{0}$; and $\neg (n =
\vfMinSigned\ \land\ d = \mathtt{-1})$. From these three conditions it follows
that the division result is not out-of-bounds. But the actual proof for this
fact is never explicitly checked within VeriFast.

The question then becomes: can we provide an explicit list of all the
assumptions that were made in implementing VeriFast? Better yet: can we prove
\emph{all of them} correct, preferably in a formal proof checker such as Coq?
Finally, can we actually prove them correct
\emph{with respect to a third party formal semantics of C?}

In earlier work, Vogels \emph{et al.} \cite{vogels-fwvf-lmcs-2015} already
provided a precise definition in Coq for a subset of VeriFast's symbolic
execution and proved that subset sound with respect to concrete program
execution.

In this technical report, we present a different approach by providing a
certification \emph{for each individual run of VeriFast}. We extend VeriFast
such that, for each successful verification, it \emph{exports} a machine
checkable proof for the correctness of the program, not only with regards to our
own formal semantics (called \cbsem), but with regards to CompCert's
\emph{Clight semantics} as well.

CompCert \cite{leroy-compcert-cacm-2009} is a formally verified C compiler which
is guaranteed to not introduce new behaviors into a program, ensuring that the
guarantees provided by a formal verification tool will be preserved during
compilation. Clight \cite{blazy-clight-jar-2009} is an intermediate C-like
language used within CompCert that has pure expressions and implements
assignments and function calls as statements. Therefore, proving correctness of
the program with regards to Clight allows a VeriFast user to \emph{entirely}
shift trust from VeriFast to Coq's computational kernel and to CompCert Clight
semantics.

\subsection{Overview}

The work that is presented in this report provides preliminary validation of our
approach for a limited subset of VeriFast and C language features.

We begin our report in Section \ref{section:example} with a concrete
demonstration of our approach for a simple C program. We explain how VeriFast
symbolically executes that program and explore the contents of the Coq artefact
exported by VeriFast after verification of the program.

Section \ref{section:vfcx} begins by delimiting the subset of C language
features and VeriFast constructs that we chose for our proof of concept. This
subset is presented as an intermediate language, called \vfcx, which is
substantially similar to C but with some small tweaks to facilitate our work in
Coq. The chosen C subset is strong enough to encode terminating and diverging
executions and to allow expression of some simple undefined behavior,
specifically integer overflows and division by zero.

Section \ref{section:symexec} continues with presenting our first main result,
which consists of a Coq function $\vfSymExecFunc$ that takes a \vfcx\ function
together with its specification and generates a shallow embedding of VeriFast's
symbolic execution for that function into a Coq proposition. This proposition is
called the \emph{symbolic execution proposition} or SEP. We then describe our
current approach towards automatically proving the SEP in Coq. Proving the SEP
remains easy for now, exactly because of the limited scope of \vfcx.

Section \ref{section:cbsem} presents \cbsem, which is our formal operational
semantics for \vfcx\ expressions, statements and functions. We chose big step
semantics to get started quickly. \cbsem\ is actually comprised of two sets of
rules, one for terminating programs and one for diverging programs. We present a
singular predicate $\vfCbsemFunc$ describing \cbsem's notion of correctness of a
function. We conclude Section \ref{section:cbsem} with a theorem proving the
soundness of $\vfSymExecFunc$ with regards to $\vfCbsemFunc$, providing us with
a method to prove correctness with regards to \cbsem\ by symbolic execution.

Section \ref{section:clight} sets up a relation between \cbsem\ and Clight's big
step semantics for stores, expressions, statements and functions. These
relations are set up for a slightly smaller subset of \vfcx\ which is still big
enough to include the example program from Section \ref{section:example}. We
then provide custom-made notions of \emph{program correctness} for both \cbsem\
and Clight, and conclude with a soundness theorem stating that correctness of a
program in Clight follows from correctness of the equivalent program in \cbsem\
(and therefore by symbolic execution in VeriFast).

We end this technical report with a brief discussion of related work (Section
\ref{section:relatedwork}) and future work (Section \ref{section:futurework}).
Appendix~\ref{appendix:building_and_using} provides instructions on building and
using our development. Appendix~\ref{appendix:overview} gives an overview of the
modules found in our Coq development and links the concise notation used in this
report to the names used in the Coq code. Finally,
Appendix~\ref{appendix:listings} shows the entire Coq script that is exported by
our extension for the example from Section~\ref{section:example}.

\subsection{Source code}

The source code for our VeriFast extension, effectively a fork of VeriFast, can
be downloaded here:

\begin{center}
  \vfurl
\end{center}

Our extension generates Coq code and also includes a small Coq library. In this
technical report, we will use two methods for rendering such Coq code:
\begin{enumerate}
\item Larger chunks of Coq code will be rendered verbatim, similar to how we
will render C code.
\item For all other cases, in the interest of readability, we will use a more
concise form of notation. \vfcx\ ASTs embedded within these concisely rendered
Coq terms will likewise use a short hand form, marked by use of a different
font.
\end{enumerate}
Both forms of notation will be first introduced in Section
\ref{section:example}.

% % % % % % % % % % % % % % % % % % % % % % % % % % % % % % % % % % % % % % % %

\section{An example}
\label{section:example}

In this section we illustrate the ideas of our report by applying them to a
concrete program. We begin by presenting the program and illustrating its
symbolic execution in VeriFast. We then use this example program to showcase the
actual workflow of our approach to certified verification.

\subsection{Verifying a program with symbolic execution}
\label{subsection:example_program}

Listing \ref{lst:test_countdown.c} shows \coderef{tests/coq/test_countdown.c}, a
C program implementing a simple countdown loop which is included with our
VeriFast extension.

\vfclistinglabel{sources/test_countdown.c}{\coderef{tests/coq/test_countdown.c}}{lst:test_countdown.c}

When compiling this program and executing it, the memory location denoted by
\vfcinline{x} gets value 32767. The body of the \vfcinline{while} loop will be
executed 32767 times, at which point the loop guard detects that \vfcinline{x =
0}, the loop is exited and the program terminates with a return value of 0.

In contrast to this familiar \emph{concrete} mode of execution, VeriFast
performs \emph{symbolic} execution. This means that a variable such as
\vfcinline{x} does not always have a concrete value, such as 32767. Instead, it
may receive a \emph{symbolic} value $\varsigma \in \mathit{Symbols}$, the
possible interpretations of which are \emph{constrained} by a \emph{path
condition}. This path condition comprises a set of assumptions that further
narrows down the possible interpretations for $\varsigma$, based e.g.~on the
preconditions (specified at the beginning of the function by the
\vfcinline{requires} keyword), branching conditions (such as the loop condition
\vfcinline{x < 0}) and user-provided loop invariants (specified using the
\vfcinline{invariant} keyword).

\newcommand{\symexecbullet}{\ensuremath{\color{sol-orange} \bullet}}

% Describing assumptions.
\newcommand{\ASM}[1]{\ensuremath{\{ #1 \}}}
\newcommand{\ASMnew}[1]{\ensuremath{{\color{sol-green}#1}}}

% Describing stores.
\newcommand{\STR}[1]{\ensuremath{\langle[ #1 ]\rangle}}
\newcommand{\STRnew}[1]{\ensuremath{{\color{sol-green}#1}}}

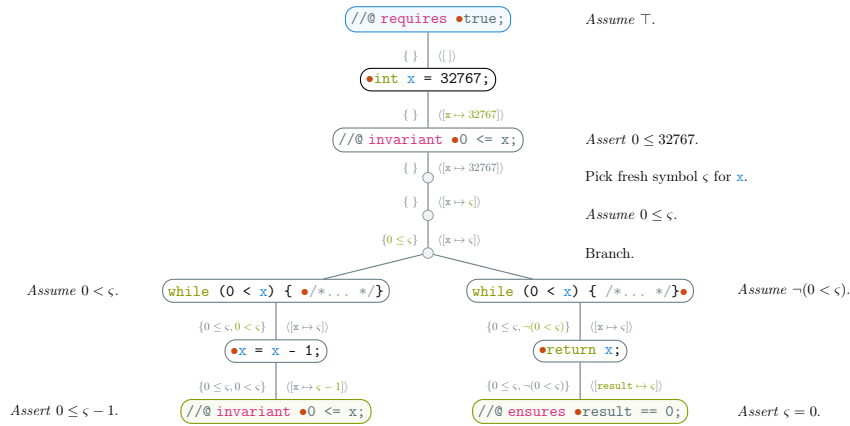
\begin{figure}[!ht]
  \centering
  % \tikzstyle{symexec-node} = [draw, color=sol-base00!50, fill=sol-base00!10]
  \tikzstyle{symexec-node} = [draw]
  \tikzstyle{symexec-node-start} = [draw, color=sol-blue, fill=sol-blue!5]
  \tikzstyle{symexec-node-done} = [draw, color=sol-green, fill=sol-green!5]
  \tikzstyle{symexec-node-circ} = [circle, scale=0.7, draw, color=sol-base00!70, fill=sol-base00!10]
  \tikzstyle{symexec-edge} = [color=sol-base00]
  \tikzstyle{symexec-edge-l} = [left, scale=0.7, xshift=-0.2cm, yshift=-0.2cm, text=sol-base00]
  \tikzstyle{symexec-edge-r} = [right, scale=0.7, xshift=0.2cm, yshift=-0.2cm, text=sol-base00]
  \begin{tikzpicture}[
      every node/.style={text centered, rounded corners, scale=0.55},
      level 3/.style={level distance=0.5cm},
      level 7/.style={level distance=0.8cm},
      level distance=0.8cm,
      sibling distance=4cm
    ]
    \node (Root) [symexec-node-start] {\vfcinline{//@ requires $\ \symexecbullet$ true;}}
    child {
      node [symexec-node] {\vfcinline{$\symexecbullet$ int x = 32767;}}
      child {
        node [symexec-node] {\vfcinline{//@ invariant $\ \symexecbullet$ 0 <= x;}}
        child {
          node[symexec-node-circ] {}
          child {
            node[symexec-node-circ] {}
            child {
              node[symexec-node-circ] {}
              child {
                node [symexec-node] {\vfcinline{while (0 < x) \{ $\ \symexecbullet$ /* ... */ \}}}
                child {
                  node [symexec-node] {\vfcinline{$\symexecbullet$ x = x - 1;}}
                  child {
                    node  (LastL) [symexec-node-done] {\vfcinline{//@ invariant $\ \symexecbullet$ 0 <= x;}}
                    % child { node (LastL) [symexec-node-done] { } }
                    edge from parent [symexec-edge] {
                      node[symexec-edge-l] {$\ASM{ 0 \leq \varsigma, 0 < \varsigma }$}
                      node[symexec-edge-r] {$\STR{ \mathtt{x} \mapsto \STRnew{\varsigma - 1} }$}
                    }
                  }
                  edge from parent [symexec-edge]  {
                    node[symexec-edge-l] {$\ASM{ 0 \leq \varsigma, \ASMnew{0 < \varsigma} }$}
                    node[symexec-edge-r] {$\STR{ \mathtt{x} \mapsto \varsigma }$}
                  }
                }
                edge from parent [symexec-edge]
              }
              child {
                node [symexec-node] {\vfcinline{while (0 < x) \{ /* ... */ \} $\symexecbullet$}}
                child {
                  node [symexec-node] {\vfcinline{$\symexecbullet$ return x;}}
                  child {
                    node (LastR) [symexec-node-done] {\vfcinline{//@ ensures $\ \symexecbullet$ result == 0;}}
                    % child { node (LastR) [symexec-node-done] { } }
                    edge from parent [symexec-edge] {
                      node[symexec-edge-l] {$\ASM{ 0 \leq \varsigma, \neg (0 < \varsigma) }$}
                      node[symexec-edge-r] {$\STR{ \STRnew{\mathtt{result} \mapsto \varsigma} }$}
                    }
                  }
                  edge from parent [symexec-edge] {
                    node[symexec-edge-l] {$\ASM{ 0 \leq \varsigma, \ASMnew{\neg (0 < \varsigma)} }$}
                    node[symexec-edge-r] {$\STR{ \mathtt{x} \mapsto \varsigma }$}
                  }
                }
                edge from parent [symexec-edge]
              }
              edge from parent [symexec-edge] {
                node[symexec-edge-l] {$\ASM{ \ASMnew{0 \leq \varsigma} }$}
                node[symexec-edge-r] {$\STR{ \mathtt{x} \mapsto \varsigma }$}
              }
            }
            edge from parent [symexec-edge] {
              node[symexec-edge-l] {$\ASM{ \ }$}
              node[symexec-edge-r] {$\STR{ \mathtt{x} \mapsto \STRnew{\varsigma} }$}
            }
          }
          edge from parent [symexec-edge] {
            node[symexec-edge-l] {$\ASM{ \ }$}
            node[symexec-edge-r] {$\STR{ \mathtt{x} \mapsto 32767 }$}
          }
        }
        edge from parent [symexec-edge] {
          node[symexec-edge-l] {$\ASM{ \ }$}
          node[symexec-edge-r] {$\STR{ \STRnew{\mathtt{x} \mapsto 32767} }$}
        }
      }
      edge from parent [symexec-edge] {
        node[symexec-edge-l] {$\ASM{ \ }$}
        node[symexec-edge-r] {$\STR{ \ }$}
      }
    };

    % Right annotations.
    \begin{scope}[
      every node/.style={right,scale=0.5}
    ]
      \path (Root           -| LastR) node {\emph{Assume} $\top$.};
      \path (Root-1-1       -| LastR) node {\emph{Assert} $0 \leq 32767$.};
      \path (Root-1-1-1     -| LastR) node {Pick fresh symbol $\varsigma$ for \vfcinline{x}.};
      \path (Root-1-1-1-1   -| LastR) node {\emph{Assume} $0 \leq \varsigma$.};
      \path (Root-1-1-1-1-1 -| LastR) node {Branch.};
    \end{scope}

    % Bottom left annotations.
    \begin{scope}[
      every node/.style={left,scale=0.5,xshift=-4cm}
    ]
      \path (Root-1-1-1-1-1-1     -| LastL) node {\emph{Assume} $0 < \varsigma$.};
      \path (LastL -| LastL) node {\emph{Assert} $0 \leq \varsigma - 1$.};
    \end{scope}

    % Bottom right annotations.
    \begin{scope}[
      every node/.style={right,scale=0.5,xshift=4cm}
    ]
      \path (Root-1-1-1-1-1-2     -| LastR) node {\emph{Assume} $\neg(0 < \varsigma)$.};
      \path (LastR -| LastR) node {\emph{Assert} $\varsigma = 0$.};
    \end{scope}
  \end{tikzpicture}
  \caption{A structural overview of symbolic execution in VeriFast.}
  \label{fig:symexec}
\end{figure}

Figure \ref{fig:symexec} shows in detail how VeriFast symbolically executes the
above program. Each \emph{node} of this tree can be considered a breakpoint in
the symbolic execution performed by VeriFast (marked by the red dot), associated
with a \emph{state}. This state is shown above the nodes, with the current path
condition on the left and the current \emph{symbolic store} on the right. The
symbolic store links each C variable to its current concrete or symbolic value.
Each edge represents a state transition which can change the path condition and
symbolic store; these changes are rendered in green.

Since \vfcinline{main} is given the weakest possible precondition
(\vfcinline{true} or $\top$), initially nothing can be assumed. In the next
line, the variable \vfcinline{x} is initialized with a literal value, so the
store now records a concrete value for \vfcinline{x}. Things get more
interesting in the \vfcinline{while} loop: first VeriFast will assert that the
loop invariant holds, which in this case is easy to see. After this assertion,
the value for \vfcinline{x} is replaced by a fresh symbolic value $\varsigma \in
\mathit{Symbols}$, for which the loop invariant \vfcinline{0 <= x} is assumed,
meaning that we know $0 \leq \varsigma$, whatever the value of $\varsigma$ is
otherwise. The symbolic execution now proceeds by splitting in two separate
execution branches:
\begin{itemize}
  \item The left branch executes \emph{all possible iterations of the loop} at
  once\footnote{In fact, depending on the size of type \vfcinline{int}, it
  executes many more iterations than would be possible in concrete execution. In
  effect, symbolic execution embodies an \emph{overapproximation} of all
  possible concrete executions.}: it tracks \emph{all} executions in which the
  loop condition \vfcinline{0 < x} holds. This means we assume $0 < \varsigma$.
  The expression statement \vfcinline{x = x - 1} leads to an update in the
  symbolic store. If \vfcinline{x} had symbolic value $\varsigma$ before, it now
  has value $\varsigma - 1$. Having reached the end of the loop body, VeriFast
  performs another assertion that the loop invariant \vfcinline{0 <= x} holds.
  Indeed, the SMT solver within VeriFast concludes that $0 \leq \varsigma - 1$,
  based on the assumptions about $\varsigma$ currently found in the path
  condition.

  \item The right branch executes \emph{the code after the loop}. This means
  that the loop condition does not hold, so our path condition now assumes $\neg
  (0 < \varsigma)$. It executes the \vfcinline{return x} statement, which leads
  to the symbolic store receiving a variable \vfcinline{result}, whose symbolic
  value is also $\varsigma$. Exiting the function means that we have to assert
  that the postcondition holds (specified by the \vfcinline{ensures} keyword).
  This leads to the SMT solver in VeriFast checking whether $\varsigma = 0$,
  which indeed it can conclude from the assumptions collected in the path
  condition.
\end{itemize}

Since both the loop body and the loop continuation satisfy their respective
postconditions, VeriFast concludes that the function (and the entire program) is
correct in terms of the C language and of its own specification.

\subsection{Certified program verification}

We will now demonstrate our approach to certified program verification using the
program shown in Listing \ref{lst:test_countdown.c}. (Appendix
\ref{appendix:building_and_using} provides further instructions on building and
using the extension on your own machine.) Our approach requires three steps:
\begin{enumerate}
  \item First, we run \texttt{clightgen} on the program. This tool, included
  with the CompCert distribution, generates a Coq script (\coderef{test_cc.v})
  containing the abstract syntax tree for the program in the Clight intermediate
  language:
  \begin{verbatim}$ clightgen tests/coq/test_countdown.c -o test_cc.v\end{verbatim}

  \item Second, we run VeriFast with the new \texttt{-emit\_coq\_proof} option
  on the program. This generates a second Coq script containing the
  VeriFast-generated correctness proof (\coderef{test_vf.v}):
  \begin{verbatim}$ bin/verifast -shared -emit_coq_proof -bindir bin
      tests/coq/test_countdown.c\end{verbatim}

  \item \coderef{test_vf.v} imports the \texttt{clightgen}-generated AST from
  \coderef{test_cc.v}. So finally, we can let Coq check our proof by compiling
  \emph{both} scripts:
  \begin{verbatim}$ coqc test_cc.v
$ coqc test_vf.v -Q src/coq verifast\end{verbatim}
\end{enumerate}
Let's inspect the contents of both scripts in more detail.

% Figure \ref{fig:chain} shows how the
% artefact exported from VeriFast proves the correctness of the program with
% regards to CompCert's Clight semantics.

% \input{graphics/chain.tex}

% We will demonstrate our approach by applying it to the example program from
% Listing \ref{lst:test_countdown.c}.

% In the rest of this section, we will demonstrate this approach for

\subsection{The Coq script exported by \texttt{clightgen}}

When we run CompCert's \texttt{clightgen} tool on the above example, the
resulting Coq script \coderef{test_cc.v} contains, among other things, a record
\coqinline{f_main} describing the function body, arguments and return type of
function \vfcinline{main}. The function body \coqinline{f_main.fn_body} is a
Clight AST. For the example program, \texttt{clightgen} generates the following
AST in \coderef{test_cc.v}:
\lstinputlisting[
  language=Coq,
  style=coqfootnotenumbered,
  linerange={80-103},
  firstnumber=80
]{sources/test_cc.v}
As we will see in the next section, it is this AST that will be imported from
within the script exported by VeriFast. The rest of \coderef{test_cc.v} consists
mostly of declarations of external functions which do not interest us.

\subsection{The Coq script exported by VeriFast}

A full listing of the Coq script exported by VeriFast (\texttt{test\_vf.c}) can
be found in Listing \ref{lst:test_vf.v} in Appendix \ref{appendix:listings}.
Inspecting this script reveals the following major sections:
\begin{enumerate}
  \item \emph{An import of the Clight script}.

  \item \emph{The AST} of the program as parsed by VeriFast, corresponding to
  what we find in the \texttt{clightgen} export, but in the \vfcx\ language.
  Notably, it includes the VeriFast annotations, specifically the precondition
  and postcondition for \vfcinline{main} and the loop invariant.

  \item \emph{A lemma} stating that the program represented by this data
  structure is correct with regards to its own included specification and with
  regards to VeriFast's own operational semantics \cbsem. The proof for this
  lemma reduces the goal to proving success of our Coq encoding of VeriFast's
  symbolic execution. The rest of the proof consists of tactic applications
  discharging the proof obligations of the symbolic execution; it is generated
  by instrumenting VeriFast's symbolic execution engine.

  \item \emph{A proof goal} stating that the imported Clight program is correct
  with regards to Clight's \emph{own big step operational semantics}, proven by
  application of the above lemma.
\end{enumerate}
Let's consider each of these sections in more detail.

\subsubsection{Importing the Clight script}

First, we import the Clight script from \coderef{test_cc.v} and use it to
construct a ``local'' Clight program \coqinline{test_cc_prog}:
\lstinputlisting[
  language=Coq,
  style=coqfootnotenumbered,
  linerange={25-32},
  firstnumber=25
]{sources/test_vf.v}
This \coqinline{test_cc_prog} is like the program found in the Clight export,
but it does not contain the large set of external global definitions that Clight
adds.

\subsubsection{Program AST in \vfcx}
\label{subsubsection:program_ast_in_vfcx}

Next, we look at the description of the program as parsed by VeriFast. Since we
currently only support programs with a single main function taking no arguments,
the data structure consists \emph{only} of the AST of that function's body. The
AST for the example program in \coderef{test_vf.v} looks like this:

\lstinputlisting[language=Coq, style=coqfootnotenumbered, linerange={36-60},
  firstnumber=36]{sources/test_vf.v} As mentioned in the introduction, in the
  rest of this text we will often use a more concise notation system for Coq
  code and, within that concise notation system, a special font for \vfcx\ ASTs
  embedded in Coq. This concise notation renders the above Coq definition as
  follows:
\begin{align*}
  \coqT{main_vf_func} :=\
  & \coqs{0} \vfxLet{\vfxVar{x}}{\vfxIntLit{32767}}{\vfxSeq{ \\
  & \coqs{0} \vfxWhile{(\vfxBinop{\vfxIntLit{0}}{<}{\vfxVar{x}})}{(\vfxBinop{\vfxIntLit{0}}{\leq}{\vfxVar{x}})}{ \\
  & \coqs{1}   \vfxSeq{\vfxBlock{\vfxSeq{\vfxBinop{\vfxVar{x}}{=}{\vfxBinop{\vfxVar{x}}{-}{\vfxIntLit{1}}}}{\vfxSkip}}}{\vfxSkip} \\
  & \coqs{0} }}{\vfxSeq{\vfxReturn{\vfxVar{x}}}{\vfxSkip}}}.
\end{align*}
\vfcx\ itself will be specified in full detail in Section \ref{section:vfcx}.

\subsubsection{Correctness with regards to \cbsem}

Below the definition of \coqinline{main_vf_func} we find the lemma stating
correctness of \coqinline{main_vf_func} in terms of \cbsem, the operational
semantics for VeriFast programs:
\lstinputlisting[
  language=Coq,
  style=coqfootnotenumbered,
  linerange={64-69},
  firstnumber=64
]{sources/test_vf.v}
Translating this into our more concise notation system, the lemma has the
following proof goal (see Definition \ref{def:cbsem_exec_func_correct} for
$\vfCbsemFunc$):
\begin{align*}
  \vdash \coqAAAA{\vfCbsemFunc}{\coqT{[}\ \coqT{]}}{\vfxTrue}{\coqT{main_vf_func}}{(\vfxBinop{\vfxVar{result}}{==}{\vfxIntLit{0}})}.
\end{align*}
This means that we have to provide, for each possible concrete execution of the
main function, starting from an initial state satisfying the weak precondition
\vfcinline{true}, a derivation showing that the execution either diverges or
terminates with a return value satisfying the postcondition.

Instead of constructing these derivations directly, the proof for this lemma
begins by applying the soundness result from Theorem
\ref{thm:symexec_func_sound} which states that correctness of a function in the
operational semantics \cbsem\ can be proven by symbolic execution of that
function:
\lstinputlisting[
  language=Coq,
  style=coqfootnotenumbered,
  linerange={64-64},
  firstnumber=64
]{sources/test_vf.v}
$\vdots$
\lstinputlisting[
  language=Coq,
  style=coqfootnotenumbered,
  linerange={70-72},
  firstnumber=70
]{sources/test_vf.v}
Application of this theorem gives us a new proof goal containing the symbolic
execution predicate $\vfSymExecFunc$ (see Definition \ref{def:sym_exec_func} for
$\vfSymExecFunc$):
\begin{align*}
  \vdash \coqAAAA{\vfSymExecFunc}{\coqT{[}\ \coqT{]}}{\vfxTrue}{\coqT{main_vf_func}}{(\vfxBinop{\vfxVar{result}}{==}{\vfxIntLit{0}})}.
\end{align*}

The goal now states that symbolic execution of the function must succeed. As
will be seen in \ref{subsection:sep}, this symbolic execution consists of
constructing a symbolic execution proposition or SEP. Reduction of the current
proof goal (performed by the repeated application of the \coqinline{autounfold}
and \coqinline{simpl} tactics in line 72) effectively computes this SEP for
\coqinline{main_vf_func} with the stated pre- and postcondition:
\begin{align*}
  \vdash \coqTrue \to\
  & \coqs{0} (0 \le 32767) \land (\\
  & \coqs{1}   \forall\ \coqT{sym}: \coqZ, (\vfMinSigned \le \coqT{sym} \le \vfMaxSigned) \to \\
  & \coqs{2}     (0 \le \coqT{sym}) \to ( \\
  & \coqs{3}       (0 < \coqT{sym}) \to \\
  & \coqs{4}         (\vfMinSigned \le \coqT{sym} - 1) \land (\coqT{sym} - 1 \le \vfMaxSigned)\ \land \\
  & \coqs{4}         (0 \le \coqT{sym} - 1) \land \coqTrue \\
  & \coqs{2}     ) \land ( \\
  & \coqs{3}       \neg (0 < \coqT{sym}) \to \\
  & \coqs{4}         \coqT{sym} = 0 \land \coqTrue \\
  & \coqs{2}     )).
\end{align*}
The structure of this Coq proof goal corresponds to the structure of the graph
in Figure \ref{fig:symexec}, with universal quantification taking the place of
symbol picking. The rest of the proof is then provided by a tactic script
exported by VeriFast, based on recordings of calls to the internal SMT solver:
\lstinputlisting[language=Coq, style=coqfootnotenumbered, linerange={64-64},
firstnumber=64]{sources/test_vf.v} $\vdots$ \lstinputlisting[language=Coq,
style=coqfootnotenumbered, linerange={70-85}, firstnumber=70]{sources/test_vf.v}
$\vdots$ \lstinputlisting[language=Coq, style=coqfootnotenumbered,
linerange={99-99}, firstnumber=99]{sources/test_vf.v} We do not show the entire
proof for the sake of brevity (again, see Appendix \ref{appendix:listings} for
that), but it is clear that we only use three tactics: \coqinline{intros},
\coqinline{split} and \coqinline{lia}. Subsection \ref{subsection:proving_sep}
provides more details on the relation between these tactics and the SEP
structure.

\subsubsection{Correctness with regards to Clight}

The fourth and final part of the Coq script consists of a direct proof goal
stating that the program \coqinline{test_cc_prog} is correct, using a predicate
for Clight-based program correctness introduced in Definition
\ref{def:compcert_exec_prog_correct}. We show at once the proof goal and its
actual proof:
\lstinputlisting[
  language=Coq,
  style=coqfootnotenumbered,
  linerange={103-117},
  firstnumber=103
]{sources/test_vf.v}
In essence, this expresses that the program either terminates with some return
value or diverges according to Clight's own big step semantics.

The proof for this statement is boilerplate, in the sense that it can be
automatically constructed \emph{by the same set of Coq tactics for every
supported program}. It commences by application of theorem
\coqinline{vf_cl_sound} in line 105, corresponding to our second soundness
theorem (Theorem \ref{thm:vf_cl_sound}). As a result, we now need to prove two
subgoals:
\begin{enumerate}
  \item The existence of a correspondence relation between
  \coqinline{test_cc_prog} and a slightly transformed version of
  \coqinline{main_vf_func} (see Definition \ref{def:prog_equiv} for more
  details). The proof for this subgoal involves proving that the Clight and
  VeriFast ASTs ``line up'' (\coqinline{repeat econstructor} on line 106) and
  the production of an initial execution state for the Clight program (lines
  107--114).
  \item Through application of an intermediate step
  (\coqinline{cbsem_func_sound}, see Lemma \ref{lem:cbsem_func_sound}), the
  second subgoal now becomes (in concise notation):
  \begin{align*}
    \vdash \coqAAAA{\vfCbsemFunc}{\coqT{[}\ \coqT{]}}{\vfxTrue}{\coqT{main_vf_func}}{\coqHole{?q}}.
  \end{align*}
  This subgoal can be proven by \coqinline{main_cbsem_by_symexec_tactics},
  our earlier lemma. Using that our notion of correctness of a Clight program
  does not say anything about the program's exit code, any postcondition $q$ is
  acceptable for proving the correctness of the Clight program, including the
  postcondition $\vfxBinop{\vfxVar{result}}{==}{\vfxIntLit{0}}$ verified in the
  above lemma.
\end{enumerate}

Since the Coq type checker accepts the VeriFast Coq and the CompCert libraries
and both scripts for the example program, we can be confident in our conclusion.
In fact, in the course of developing the soundness proof of \cbsem\ with respect
to Clight, we already discovered a bug in VeriFast itself, related to
appropriate checks on the result of a division (which was quickly fixed).

% % % % % % % % % % % % % % % % % % % % % % % % % % % % % % % % % % % % % % % %

\section{\vfcx}
\label{section:vfcx}

We begin by describing the C language subset which is currently supported by the
export extension to VeriFast\footnote{The notation we use in this paper is more
compact than the notation we use in the actual Coq code. In the near future,
however, we will adapt our Coq code to use the notation introduced in this
paper.}. We chose just enough language features to be able to write programs
that diverge and that can exhibit some straightforward forms of undefined
behavior, specifically out-of-bounds arithmetic and division by zero. This means
that our subset has many notable omissions and limitations:
\begin{itemize}
\item we currently stick to variables having a single explicit type
(\vfcinline{int});
\item we do not support function calls;
\item all variables are locally allocated: there is no heap memory.
\end{itemize}

When VeriFast exports a parsed AST to Coq, the exported AST is actually a
program in a slightly different language, which we call
\emph{\vfcx}\footnote{This allows us to retain the name ``VeriFast C'' to refer
to VeriFast's C dialect: the collective set of choices which VeriFast in its
implementation makes with regards to the C language.}, with the \emph{x}
denoting ``export''. This language differs syntactically from actual C in a few
ways which will be quite familiar to many readers:
\begin{enumerate}
\item an OCaml/Haskell/... \texttt{let}-like variable binding is used for
variable declaration;
\item a \emph{list} of statements in C (e.g. \vfcinline{s1; s2; s3}) is
represented in \vfcx\ by nesting multiple instances of a binary \emph{sequence
operator} $\vfxSeq{\_}{\_}$ terminated by a $\vfxSkip$ statement (for the
earlier example we still get
$\vfxSeq{\vfxSeq{\vfxSeq{s_1}{s_2}}{s_3}}{\vfxSkip}$, because the sequence
operator notation is right associative);
\item the \vfcinline{while} loop carries its invariant directly within the
AST\footnote{In the Coq code, the \texttt{while} loop actually also has a second
statement which is currently not used.}.
\end{enumerate}
\vfcx\ does retain the use of named local variables (as opposed to common use of
De Bruijn indices).

The transformations of C into \vfcx\ make our work in Coq simpler, without
altering the meaning of programs substantially. The transformation into \vfcx\
takes place in VeriFast itself. We do not make any direct formal statement about
the relationship between the parsed C program and the transformed \vfcx\ program
which is exported to Coq\footnote{In Section \ref{section:clight} we will make 2
further transformations to the AST of a function. However, these transformations
were specifically geared towards Clight. Since we consider \cbsem\ as a result
standing on its own, we decided at the time to perform them within Coq rather
than within VeriFast.}. This means that all our theorems hold for \vfcx\ rather
than ``VeriFast C''.

This lack of formal relation is indirectly mitigated in section
\ref{section:clight}, where we will define a relation $\vfclEquivStmtSymbol$
between a function body $s$ exported by VeriFast and the corresponding function
body $s_\CL$ exported by CompCert's \texttt{clightgen} tool. Indeed, a crucial
element in proving correctness of a CompCert function with body $s_\CL$ by
symbolically executing $s$ will be our ability to prove that this relation holds
between $s$ and $s_\CL$.

\begin{definition}
  \label{def:expr}

  The set of \vfcx\ \emph{expressions} currently generated by our export code is
  defined inductively in Coq as:
  \begin{align*}
    e \in \vfExpr ::=\ & \vfxTrue \coqSep \vfxFalse \coqSep \vfxIntLit{z} \coqSep \vfxVar{x} \\
    & \coqSep \vfxBinop{e_1}{+}{e_2} \coqSep \vfxBinop{e_1}{-}{e_2} \coqSep \vfxBinop{e_1}{/}{e_2} \\
    & \coqSep \vfxBinop{e_1}{<}{e_2} \coqSep \vfxBinop{e_1}{\leq}{e_2} \coqSep \vfxBinop{e_1}{==}{e_2} \coqSep \vfxBinop{e_1}{\neq}{e_2} \\
    & \coqSep \vfxBinop{e_1}{\ \&\&\ }{e_2} \coqSep \vfxBinop{e_1}{\ ||\ }{e_2} \coqSep\ \vfxUnop{!}{e} \\
    & \coqSep \vfxBinop{e_1}{=}{e_2}.
  \end{align*}
\end{definition}

The above set includes boolean comparison operators, but we will
see how symbolic execution in Coq will limit usage of these operators in the
conditions of conditional statements and loops.

\begin{notation}
  $[\vfxVar{z}]$ denotes the actual Coq integer value $z \in \coqZ$
  corresponding to the integer literal $\vfxVar{z}$. Likewise, $[\vfxVar{x}]$
  denotes the actual Coq string value $x \in \coqString$ corresponding to
  variable reference $\vfxVar{x}$.
\end{notation}

\begin{definition}
  \label{def:stmt}

  Likewise, the set of \vfcx\ \emph{statements} currently generated by our
  export code is defined inductively as:
  \begin{align*}
    s \in \coqT{stmt} ::=\ & \vfxSkip \\
    & \coqSep \vfxSeq{s_1}{s_2} && \text{Sequence statement} \\
    & \coqSep \vfxLet{\vfxVar{x}}{e}{s} && \text{\texttt{let}-style variable declaration}\\
    & \coqSep e && \text{Expression statement} \\
    & \coqSep \vfxIf{e}{s_1}{s_2} \\
    & \coqSep \vfxReturn{e} \\
    & \coqSep \vfxWhile{e_c}{e_i}{s} && \text{Loop with condition $e_c$ and invariant $e_i$}\\
    & \coqSep \vfxBlock{s}. && \text{Block statement}
  \end{align*}
\end{definition}

All \vfcx\ \emph{functions} currently have return type \vfcinline{int} and take
zero or more arguments of type \vfcinline{int}. A \vfcx\ function in Coq is
defined by four things: (1) a list of argument names $\overline{x} \in \coqList\
\coqString$; (2) an expression $e_p$ representing the function's precondition;
(3) the function body $s$ and (4) an expression $e_q$ representing the
postcondition. We do not currently define a separate Coq type for a function,
but just pass around these four items as separate arguments wherever they are
needed.

% % % % % % % % % % % % % % % % % % % % % % % % % % % % % % % % % % % % % % % %

\section{Symbolic execution in Coq}
\label{section:symexec}

This section discusses our approach to symbolic execution of a function in Coq.
This approach consists of two steps. Subsection \ref{subsection:sep} shows how
the function's AST, together with its pre- and postcondition, is first
transformed into a logical sentence, called the symbolic execution proposition
or SEP. This SEP is a Coq term of type $\coqProp$ which corresponds structurally
to the tree formed by VeriFast's internal branching and pushing and popping of
SMT solver contexts. It is a shallow embedding of the assertions made in the
various execution branches, each execution branch having its own set of
assumptions representing the path condition of that branch. Subsection
\ref{subsection:proving_sep} then describes how the VeriFast extension currently
exports a proof for this proposition.

\subsection{Constructing the SEP}
\label{subsection:sep}

We will provide our technical description of constructing a SEP in a rather
bottom-up fashion, starting from a few conceptual ideas and lower-level
definitions and working up towards the main correctness predicate for symbolic
execution of a \vfcx\ function: $\vfSymExecFunc$ (see definition
\ref{def:sym_exec_func} below). Coq of course works in the opposite direction by
gradually unfolding and reducing an instance of $\vfSymExecFunc$ into a term of
type $\coqProp$.

\subsubsection{Stores, continuations and symbol picking}

\begin{definition}
  \label{def:store}

  Given that we currently only support variables of type \vfcinline{int}, a
  \emph{symbolic store} is simply defined as a total function with an empty
  store $\vfEmptyStore$ and a single update operation $\vfStoreUpdate{\_}{\_}:
  \coqString \to \coqOption \coqZ \to \vfStore \to \vfStore$, handling both
  inserts and deletions:
  \begin{align*}
    \sigma \in \vfStore := & \coqString \to \coqOption \coqZ. \\
    \vfEmptyStore := & \coqL{\_}{\coqNone}. \\
    \vfStoreUpdate{x}{v} \sigma := & \coqL{y}{
      \begin{cases}
        v & \text{if $\coqCompBinop{x}{=}{y}$,} \\
        \sigma\ y & \text{otherwise.}
      \end{cases}
    }
  \end{align*}
\end{definition}

\emph{Continuations} allow us to express construction of the SEP by combining
various predicates in a \emph{continuation-passing style}, which corresponds
structurally well to the style of programming found in the OCaml source code of
VeriFast.
\begin{definition}[Continuations taking \emph{only} a store]
  \label{def:cont}
  \begin{align*}
    C \in \vfCont :=\ & \vfStore \to \coqProp.
  \end{align*}
\end{definition}
$\vfCont$ has been given its own type because it is so common throughout the
development. Other, less common continuations will take additional parameters on
top of a store. For instance, a \emph{return continuation} will have type $\coqZ
\to \vfCont$. In addition to a store, a return continuation also takes a return
value $z$ to generate a SEP. Important instances of these continuation
passing-style SEP-producing functions are the functions to pick fresh symbols.

\begin{definition}
  \label{def:for_Z}

  The function $\vfForZ: \coqString \to \vfCont \to \vfCont$ is defined as:
  \begin{align*}
    \vfMinSigned: \coqZ :=\ & -2147483648. \\
    \vfMaxSigned: \coqZ :=\ & 2147483647. \\
    \coqA{\coqT{is_int}}{z} :=\ & \vfMinSigned \leq z \leq \vfMaxSigned. \\
    \coqAAA{\vfForZ}{x}{C}{\sigma} :=\ & \forall\ z,\ \coqT{is_int}\ z \to C\ (\vfStoreUpdate{x}{\coqSome z}\ \sigma).
  \end{align*}
  $\vfForZ$ takes a variable identifier $x$ and, through a continuation,
  generates a proposition quantifying over $z \in \coqZ$, such that $z$ is
  within hardcoded integer bounds\footnote{In the future, we will allow
  architectural parametrization in our exported Coq script, similar to that
  found in CompCert.}.

  The fixpoint function $\vfForZs$ lifts $\vfForZ$ to pick fresh symbols for a
  whole list of identifiers $\overline{x} \in \coqList\ \coqString$ at once.
  Likewise, the fixpoint function $\vfHavocZs$ picks fresh symbols for a list of
  identifiers $\overline{x}$, \emph{with the added condition that the
  identifiers were already bound in the store which is passed to $\vfHavocZs$}.
  This condition does not exist for $\vfForZs$. We omit definitions for both
  functions.
\end{definition}

\subsubsection{Evaluation and translation of expressions}

An expression $e \in \vfExpr$ can be shallowly embedded in a SEP in two distinct
ways. When \emph{evaluating} an expression in the context of actual symbolic
execution, we require extra conditions to be embedded, for instance to check
integer bounds or to check for division by zero. These extra verification
conditions correspond to VeriFast making calls to the SMT solver during
execution of some C function. However, when \emph{translating} an expression in
the context of embedding pre- and postconditions and invariants in the SEP,
there is no need to add any extra checks in the SEP. Rather, the expression
\emph{is itself} the assumption made by the SMT solver or the assertion verified
by the SMT solver.

Since our concern with this work was to get a simple proof of concept working,
the total set of expressions which is handled properly by our Coq code is a
limited subset of what is allowed in C. To this end, we make a second, rather
artificial distinction between \emph{arithmetic expressions} ($\vfxIntLit{z}$,
$\vfxVar{x}$, $\vfxBinop{e_1}{+}{e_2}$, $\vfxBinop{e_1}{-}{e_2}$,
$\vfxBinop{e_1}{/}{e_2}$) and \emph{``Boolean'' expressions}
($\vfxBinop{e_1}{<}{e_2}$, $\vfxBinop{e_1}{\leq}{e_2}$,
$\vfxBinop{e_1}{==}{e_2}$, $\vfxBinop{e_1}{\neq}{e_2}$, $\vfxBinop{e_1}{\ \&\&\
}{e_2}$, $\vfxBinop{e_1}{\ ||\ }{e_2}$, $\vfxUnop{!}{e}$, $\vfxTrue$ and
$\vfxFalse$). The assignment expression $\vfxBinop{e_1}{=}{e_2}$ forms a third
category of its own. To enable further simplifications, only arithmetic and
Boolean expression are evaluated. Both are evaluated as
r-expressions\footnote{As will be seen in section
\ref{subsubsection:symexecstmt}, we don't need evaluation of l-expressions for
now.}.

\begin{definition}
  \label{def:eval_Z_cps}

  Fixpoint function $\vfEvalZCps: \vfExpr \to (\coqZ \to \vfCont) \to \vfCont$
  attempts to evaluate an expression $e$ to a Coq value $z$. The superscript
  $\coqT{c}$ denotes that it is a CPS function: if successful, a continuation
  $C$ is called with value $z$, producing a sub-SEP which is conjoined with any
  side conditions arising from the evaluation. That is,
  $\coqAAA{\vfEvalZCps}{e}{C}{\sigma}$ is true if and only if in store $\sigma$,
  $e$ has no undefined behavior and evaluates to a value $z$ such that
  $\coqAA{C}{z}{\sigma}$ is true.

  For integer literals and variable references, this evaluation is rather
  simple:
  \begin{align*}
    \coqAAA{\vfEvalZCps}{\vfxIntLit{z}}{C}{\sigma} := &
      \begin{cases}
        C\ [\vfxIntLit{z}]\ \sigma & \text{if $\coqCompBound{\vfMinSigned}{[\vfxIntLit{z}]}{\vfMaxSigned$},} \\
        \coqFalse & \text{otherwise.}
      \end{cases} \\
    \coqAAA{\vfEvalZCps}{\vfxVar{x}}{C}{\sigma} := &
      \begin{cases}
        C\ z\ \sigma & \text{if $\coqCompBinop{\sigma\ [\vfxIntLit{x}]}{=}{\coqSome z}$,} \\
        \coqFalse & \text{otherwise.}
      \end{cases}
  \end{align*}
  Evaluation of the arithmetic expression $\vfxBinop{\_}{+}{\_}$ is
  defined as:
  \begin{align*}
    \coqAAA{\vfEvalZCps}{(\vfxBinop{e_1}{+}{e_2})}{C}{\sigma} :=\ & \coqAAA{\vfEvalZCps}{e_1}{(\coqL{z_1\ \sigma}{ \\
    & \coqs{1}   \coqAAA{\vfEvalZCps}{e_2}{(\coqL{z_2\ \sigma}{ \\
    & \coqs{2}     \coqLet{z}{z_1 + z_2}{ \\
    & \coqs{2}     (\vfMinSigned \leq z) \land ((z \leq \vfMaxSigned) \land (\coqAA{C}{z}{\sigma}))}} \\
    & \coqs{1}   )}{\sigma}} \\
    & \coqs{0} )}{\sigma}.
  \end{align*}
  We omit the definition for $\vfxBinop{\_}{-}{\_}$, which is very
  similar to that for addition. For division we have:
  \begin{align*}
    \coqAAA{\vfEvalZCps}{(\vfxBinop{e_1}{/}{e_2})}{C}{\sigma} :=\ & \coqAAA{\vfEvalZCps}{e_1}{(\coqL{z_1\ \sigma}{ \\
    & \coqs{1}   \coqAAA{\vfEvalZCps}{e_2}{(\coqL{z_2\ \sigma}{ \\
    & \coqs{2}     \coqLet{z}{z_1 \div z_2}{ \\
    & \coqs{2}     z_2 \neq 0 \land ((z_1 \neq \vfMinSigned) \lor (z_2 \neq -1)) \land (\coqAA{C}{z}{\sigma})}} \\
    & \coqs{1}   )}{\sigma}} \\
    & \coqs{0} )}{\sigma}.
  \end{align*}
  Finally, for all other cases (Boolean expressions and the assignment
  expression), $\vfEvalZCps$ generates $\coqFalse$:
  \begin{align*}
    \coqAAA{\vfEvalZCps}{\_}{C}{\sigma} := \coqFalse.
  \end{align*}
  In principle, we could have simplified $\vfEvalZCps$, based on the observation
  that the current set of supported expressions does not change $\sigma$ during
  evaluation. This will however not remain the case in the future.
\end{definition}

The important idea is that $\vfEvalZCps$ tightly mirrors what happens in
VeriFast itself. It embeds verification conditions whenever VeriFast performs
SMT calls. Even the \emph{order} of the embedded verification conditions
corresponds to the order in which VeriFast calls the SMT solver: this can be
seen in the case for $+$, where first a comparison against $\vfMinSigned$ is
asserted, followed by a comparison against $\vfMaxSigned$, followed by the
continuation. Likewise, for the direct computational checks in the cases of
$\vfxIntLit{z}$ and $\vfxVar{x}$, VeriFast correspondingly performs the same
checks directly in the OCaml code (thereby \emph{avoiding} calls to the SMT
solver).

\begin{definition}
  \label{def:eval_Prop_cps}

  Fixpoint function $\vfEvalPropCps: \vfExpr \to (\coqProp \to \vfCont) \to
  \vfCont$ transforms the Boolean expressions into a SEP. Let us begin with the
  simple definitions for $\vfxTrue$ and $\vfxFalse$:
  \begin{align*}
    \coqAAA{\vfEvalPropCps}{\vfxTrue}{C}{\sigma} :=\ & \coqAA{C}{\coqTrue}{\sigma}. \\
    \coqAAA{\vfEvalPropCps}{\vfxFalse}{C}{\sigma} :=\ & \coqAA{C}{\coqFalse}{\sigma}.
  \end{align*}
  The cases for $\vfxBinop{\_}{==}{\_}$, $\vfxBinop{\_}{<}{\_}$,
  $\vfxBinop{\_}{\leq}{\_}$ and $\vfxBinop{\_}{\neq}{\_}$ are very similar and
  rest on calls to $\vfEvalZCps$ (Definition \ref{def:eval_Z_cps}), meaning the
  resulting proposition may contain side conditions. We only give the case for
  equality testing:
  \begin{align*}
    \coqAAA{\vfEvalPropCps}{(\vfxBinop{e_1}{==}{e_2})}{C}{\sigma} :=\ & \coqAAA{\vfEvalZCps}{e_1}{(\coqL{z_1\ \sigma}{\\
    & \coqs{1}   \coqAAA{\vfEvalZCps}{e_2}{(\coqL{z_2\ \sigma}{ \\
    & \coqs{2}     C\ (z_1 = z_2)\ \sigma} \\
    & \coqs{1}   )}{\sigma}} \\
    & \coqs{0} )}{\sigma}.
  \end{align*}
  The cases for the logical binary operations are likewise very similar. We only
  provide a definition for logical conjunction:
  \begin{align*}
    \coqAAA{\vfEvalPropCps}{(\vfxBinop{e_1}{\land}{e_2})}{C}{\sigma} :=\ & \coqAAA{\vfEvalPropCps}{e_1}{(\coqL{E_1\ \sigma}{\\
    & \coqs{1}   \coqAAA{\vfEvalPropCps}{e_2}{(\coqL{E_2\ \sigma}{ \\
    & \coqs{2}     C\ (E_1 \land E_2)\ \sigma} \\
    & \coqs{1}   )}{\sigma}} \\
    & \coqs{0} )}{\sigma}.
  \end{align*}
  The definition for logical negation is given as:
  \begin{align*}
    \coqAAA{\vfEvalPropCps}{(\vfxUnop{!}{e})}{C}{\sigma} := \coqAAA{\vfEvalPropCps}{e}{(\coqL{E\ \sigma}{\coqAA{C}{(\neg E)}{\sigma}})}{\sigma}.
  \end{align*}
  All other cases (assignment expressions and arithmetic expressions) currently
  evaluate to $\coqFalse$.
\end{definition}

\begin{definition}
  \label{def:translate_expr_to_Prop_cps}

  Fixpoint function $\vfTranslateProp: \vfExpr \to (\coqProp \to \coqProp) \to
  \vfCont$ \emph{translates} expression $e$ in store $\sigma$ into $E \in
  \coqProp$, such that $\coqAAA{\vfTranslateProp}{e}{C}{\sigma}$ is true if
  $\coqAA{C}{E}{\sigma}$ is true.

  It differs from $\vfEvalPropCps$ in that no side conditions for division by
  zero and integer boundedness are generated for the cases involving arithmetic.
  The exact definition for $\vfTranslateProp$ is not provided here.
\end{definition}

\begin{definition}
  \label{def:produce_consume}

  The functions $\vfProduce: \vfExpr \to \vfCont \to \vfCont$ and $\vfConsume:
  \vfExpr \to \vfCont \to \vfCont$ are defined as:
  \begin{align*}
    \coqAAA{\vfProduce}{e}{C}{\sigma} :=\ & \coqAAA{\vfTranslateProp}{e}{(\coqL{E}{E \to \coqA{C}{\sigma}})}{\sigma}. \\
    \coqAAA{\vfConsume}{e}{C}{\sigma} :=\ & \coqAAA{\vfTranslateProp}{e}{(\coqL{E}{E\ \coqAnd\ \coqA{C}{\sigma}})}{\sigma}.
  \end{align*}
  These functions leverage $\vfTranslateProp$ for a common use case: the actual
  \emph{production} (assumption) of pre-conditions and \emph{consumption}
  (assertion) of post-conditions.
\end{definition}

\subsubsection{Symbolic execution of statements}
\label{subsubsection:symexecstmt}

Before we can get to constructing a SEP which embeds the symbolic execution of
statements, we need to introduce two auxiliary functions.

\begin{definition}
  \label{def:stmt_to_free_possible_targets}
  Fixpoint function $\vfFreeTargets: \vfStmt \to \coqList\ \coqString$ collects
  the names of all free variables that are \emph{potentially} modified by some
  statement $s$. We omit definition of this function.
\end{definition}

\begin{definition}
  \label{def:leak_check}

  Function $\vfLeakCheck: \vfCont$ is defined as:
  \begin{align*}
    \coqA{\vfLeakCheck} := \coqL{\_}{\coqTrue}.
  \end{align*}
  It is a placeholder SEP generating function which corresponds to the checks
  performed by VeriFast upon exiting a function or at the end of a loop body.
  Since we do not deal with heap memory in this proof of concept, its
  implementation remains trivial for now.
\end{definition}

\begin{definition}
  \label{def:sym_exec_stmt}

  Symbolic execution of a \emph{statement} $s$ is implemented by the fixpoint
  function $\vfSymExecStmt: \vfStmt \to \vfCont \to (\coqZ \to \vfCont) \to
  \vfCont$. This function takes two continuations, which correspond to the two
  possible outcomes for executing a statement: \emph{normal continuation} ($C_N
  \in \vfCont$) and \emph{returning from the function} ($C_R \in \coqZ \to
  \vfCont$):
  \begin{align*}
    \coqAAAA{\vfSymExecStmt}{\vfxSkip}{C_N}{C_R}{\sigma} :=\ & \coqA{C_N}{\sigma}. \\
    \coqAAAA{\vfSymExecStmt}{(\vfxSeq{s_1}{s_2})}{C_N}{C_R}{\sigma} :=\ & \coqAAAA{\vfSymExecStmt}{s_1}{(\coqAAA{\vfSymExecStmt}{s_2}{C_N}{C_R})}{C_R}{\sigma}. \\
    \coqAAAA{\vfSymExecStmt}{(\vfxReturn{e})}{C_N}{C_R}{\sigma} :=\ & \coqAAA{\vfEvalZCps}{e}{C_R}{\sigma}. \\
    \coqAAAA{\vfSymExecStmt}{\vfxBlock{s}}{C_N}{C_R}{\sigma} :=\ & \coqAAAA{\vfSymExecStmt}{s}{C_N}{C_R}{\sigma}.
  \end{align*}
  For local variable declarations, we have:
  \begin{align*}
    & \coqAAAA{\vfSymExecStmt}{(\vfxLet{x}{e}{s})}{C_N}{C_R}{\sigma} := \\
    & \coqs{5}   \coqIf{(\coqA{\sigma}{x})}{\coqFalse}{\\
    & \coqs{6}     \coqAAA{\vfEvalZCps}{e}{(\coqL{z\ \sigma}{\\
    & \coqs{7}       \coqLet{C'_N}{(\coqL{\sigma}{\coqA{C_N}{\vfStoreUpdate{x}{\coqNone}\sigma}})}{\\
    & \coqs{7}       \coqLet{C'_R}{(\coqL{z\ \sigma}{\coqAA{C_R}{z}{\vfStoreUpdate{x}{\coqNone}\sigma}})}{\\
    & \coqs{7}       \coqAAAA{\vfSymExecStmt}{s}{C'_N}{C'_R}{\vfStoreUpdate{x}{\coqA{\coqSome}{z}}\sigma} \\
    & \coqs{6}     }}})} {\sigma}}.
  \end{align*}
  In terms of expression statements, our symbolic execution currently only
  supports direct assignments of an expression $e$ to a variable identifier
  expression $x$, which is why in practice we only need to care about evaluation
  of r-expressions:
  \begin{align*}
    \coqAAAA{\vfSymExecStmt}{(\vfxBinop{\vfxVar{x}}{=}{e})}{C_N}{C_R}{\sigma} :=\ & {
      \coqAAA{\vfEvalZCps}{e}{(\coqL{z\ \sigma}{\coqA{C_N}{\vfStoreUpdate{[\vfxVar{x}]}{\coqA{\coqSome}{z}}\sigma}})}{\sigma}
    }.
  \end{align*}
  Conditional statements branch into two path conditions, one in which the
  condition $e \in \vfExpr$ is true and one in which it is not true:
  \begin{align*}
    \coqAAAA{\vfSymExecStmt}{(\vfxIf{e}{s_1}{s_2})}{C_N}{C_R}{\sigma} :=\ & \coqAAA{\vfEvalPropCps}{e}{(\coqL{E\ \sigma}{ \\
    & \coqs{1}   (E \to (\coqAAAA{\vfSymExecStmt}{s_1}{C_N}{C_R}{\sigma})) \\
    & \coqs{1}   \land \\
    & \coqs{1}   (\neg E \to (\coqAAAA{\vfSymExecStmt}{s_2}{C_N}{C_R}{\sigma})) \\
    & \coqs{0} )}}{\sigma}.
  \end{align*}
  Finally, \vfcinline{while} loops are a little bit more involved:
  \begin{align*}
    \coqAAAA{\vfSymExecStmt}{&(\vfxWhile{e_c}{e_i}{s})}{C_N}{C_R}{\sigma} := \coqAAA{\vfConsume}{e_i}{( \\
    & \coqs{0} \coqAA{\vfHavocZs}{(\coqA{\vfFreeTargets}{s})}{( \\
    & \coqs{1}   \coqAA{\vfProduce}{e_i}{( \\
    & \coqs{2}     \coqAA{\vfEvalPropCps}{e_c}{(\coqL{E_c\ \sigma}{ \\
    & \coqs{3}       (E_C \to (\coqAAAA{\vfSymExecStmt}{s}{(\coqAA{\vfConsume}{e_i}{\vfLeakCheck})}{C_R}{\sigma})) \\
    & \coqs{3}       \land \\
    & \coqs{3}       (\neg E_C \to \coqA{C_N}{\sigma}) \\
    & \coqs{2}     })} \\
    & \coqs{1}   )}\\
    & \coqs{0} )})}{\sigma}.
  \end{align*}
  This chaining of continuations will unfold to reproduce the verification steps
  shown graphically in Figure \ref{fig:symexec}.
\end{definition}

\subsubsection{Symbolic execution of a function}

Before we can finally define symbolic execution of a function, we need to
examine one last predicate, $\vfRet$, which is used to generate the symbolic
store in which the postcondition can be asserted upon return.

\begin{definition}
  \label{def:ret}
  Predicate $\vfRet: \vfStore \to \vfCont \to \coqZ \to \vfCont$ \emph{retains}
  an ``old'' symbolic store $\sigma_0$ and ultimately passes on this store,
  updated with a return value $z$, to its own continuation $C$ (which will embed
  the actual postcondition, as shown below in definition
  \ref{def:sym_exec_func}):
  \begin{align*}
    \coqAAA{\vfRet}{\sigma_0}{C}{z} := \coqL{\_}{\coqA{C}{(\vfStoreUpdate{\coqT{"return"}}{\coqA{\coqSome}{z}} \sigma_0)}}.
    % with_store (store_update "result" (Some z) st0) C st.
  \end{align*}
  The continuation produced by $\vfRet$ simply discards the ``new'' store with
  which it is called. Indeed, a postcondition will be evaluated in terms of the
  original arguments provided to the function, together with the return value.
\end{definition}

\begin{definition}
  \label{def:sym_exec_func}

  The function correctness predicate $\vfSymExecFunc: \coqList\ \coqString \to
  \vfExpr \to \vfStmt \to \vfExpr \to \coqProp$ for a function with arguments
  $\overline{x}$, pre- and postconditions $e_p$ and $e_q$ and body $s$ is
  defined as:
  \begin{align*}
    \coqAAAA{\vfSymExecFunc}{\overline{x}}{e_p}{s}{e_q} :=\ & \vfForZs\ \overline{x}\ (\vfProduce\ e_p\ (\lambda\ \sigma, \\
    & \coqs{1} \coqA{\vfSymExecStmt}{s} \\
    & \coqs{2}   (\coqL{\_}{\coqFalse}) \\
    & \coqs{2}   (\coqAA{\vfRet}{\sigma}{(\coqAA{\vfConsume}{e_q}{\vfLeakCheck})}) \\
    & \coqs{2}   \sigma \\
    & \coqs{1} ) \\
    & ))\ \vfEmptyStore.
  \end{align*}
\end{definition}

Intuitively, definition \ref{def:sym_exec_func} states that nothing is known
about the function arguments, except that they are integers and that the
precondition must hold. Definition \ref{def:sym_exec_func} makes it necessary
for a function to either \emph{diverge} or, if it does terminate, to
\emph{return} an integer result $z$, the value of which is limited by a path
condition which satisfies the function's postcondition:

\begin{itemize}
\item A diverging symbolic execution path will accumulate contradictory
assumptions, the presence of which instantly terminates symbolic execution.
\item If some symbolic execution path of $s$ does not somehow end with a return
statement, the normalization of the subterm with $\vfSymExecStmt$ would end up
calling the normal continuation, which is $\coqL{\_}{\coqFalse}$. Proving the
resulting SEP would require proving $\coqFalse$, which is impossible.
\end{itemize}

\subsection{Proving the SEP}
\label{subsection:proving_sep}

After we have constructed and normalized a SEP for the function, we need to
provide a proof for it. Proof steps for the SEP in Coq are closely related to
changes in the VeriFast SMT solver state. We currently identify four types of
proof steps:

\begin{itemize}
  \item \emph{Assumptions}, for instance introduced by predicates $\vfForZs$ or
  $\vfProduce$, require us to move the variable or hypothesis into the proof
  context. In VeriFast this is accomplished by making \texttt{\#assume} calls to
  the SMT solver API.

  \item If the proof goal is a \emph{conjunction}, for instance produced by
  branching in some cases of $\vfSymExecStmt$ or by $\vfConsume$, we need to
  split it into two subgoals. In VeriFast, this is implemented by a function
  \texttt{\#branch}, which pushes a new SMT solver context for each branch,
  popping the context after the branch finishes.

  \item \emph{Assertions} are generated by e.g. $\vfConsume$ or some
  continuations. They are handled in VeriFast by \texttt{\#assert} calls to the
  SMT solver.

  \item As mentioned earlier, a path condition may accumulate
  \emph{contradictory assumptions}, for instance in the context of a diverging
  execution. At that point we need to point out these contradictions discharging
  the proof goal. In VeriFast, the SMT solver's internal mechanism detects this.
\end{itemize}

In Table~\ref{tab:proving} we summarize our current approach to implementing
these four types of proof steps in Coq. Since we may benefit from the limited,
integer arithmetic nature of the current set of expressions which has to be
supported, a second option would be to make a single tactic which repeatedly
applies $\coqT{intros}$, $\coqT{split}$ and $\coqT{lia}$ from within a
syntax-driven goal match.

\begin{table}[]
  \centering
  \begin{tabular}{p{30mm}|p{30mm}}
    & Tactic \\
    \hline

    Assumptions
      & $\coqT{intro}$
      \\

    Conjunctions
      & $\coqT{split}$
      \\

    Assertions
      & $\coqT{lia}$
      \\

    Contradictions
      & $\coqT{lia}$
  \end{tabular}
  \caption{Four types of proof steps.}
  \label{tab:proving}
\end{table}

For now, we choose to make VeriFast export a proof. We record all calls to the
SMT API and the \texttt{\#branch} function and use these recordings to generate
a tactics proof in the exported proof. The main difficulty in recording so far
was that contradictions ``break out'' of the SMT solver, meaning they are not
explicitly recorded. Contradictions can be seen in the recordings as the
\emph{absence} of an assertion or equivalently, as a ``dangling'' assumption.
Since contradictions in our case stem from assumptions about integer
expressions, we also use $\coqT{lia}$ to deal with these.

Exporting the proof is at this point not only more cumbersome than a simple
syntax-driven tactic in Coq, it is also quite \emph{fragile}. Ironically, this
fragility can be seen as an advantage. It can point out bugs, in VeriFast or in
the Coq code, when both do different things. If the SEP generated in Coq misses
an assertion or assumption, or if it has too many assertions or assumptions, the
exported proof may not line up with the SEP.

% % % % % % % % % % % % % % % % % % % % % % % % % % % % % % % % % % % % % % % %

\section{Big step semantics for \vfcx}
\label{section:cbsem}

Before we prove the soundness of VeriFast's symbolic execution with respect to
Clight, let us first look at an intermediate step. In this section, we develop
\cbsem, a formal semantics for \vfcx.

We chose to express ourselves using big step semantics (see Subsection
\ref{subsection:cbsem}), because this approach was somewhat easier to get
started with. The downside is that our inductive big step semantics, seen by
itself, does not allow us to distinguish program executions that have undefined
behavior from those that diverge. Neither type of execution outcome can be
derived in this big step semantics.

At first we solved this problem by adding a timeout counter to each semantic
rule. This worked well in itself and would have allowed us to stick to a single
set of derivation rules. However, in order to prove soundness with regards to
Clight, we decided it would be easier to reflect Clight's approach to big step
semantics \cite{blazy-clight-jar-2009, leroy-cbsos-ic-2009} by \emph{adding} a
separate, \emph{coinductive} set of semantic rules for diverging programs (see
Subsection \ref{subsection:cbseminf}). \cbsem\ therefore refers to the
\emph{combined} inductive and coinductive sets of semantic rules.

Throughout this section, we will relate symbolic execution as specified in
Section \ref{section:symexec} to the inductive and the coinductive semantics
that are being developed, again building up from expressions to functions.
Finally, in Subsection \ref{subsection:symexec_cbsem} we provide a
\emph{soundness theorem} proving that symbolic execution of a function is sound
with regards to our formal big step semantics \cbsem.

\subsection{Terminating statements}
\label{subsection:cbsem}

\begin{definition}
  \label{def:outcome}

  The set $\vfOut$ specifies the \emph{possible outcomes} of executing some
  \vfcx\ statement. It is defined inductively as:
  \begin{align*}
    o \in \vfOut ::=\ & \vfOutN\ |\ \coqA{\vfOutR}{z}.
  \end{align*}
  In the same way that predicate $\vfSymExecStmt$ for symbolic execution accepts
  two continuations, one for a normal continuation and one for a return
  situation, $\vfOut$ is defined by two constructors.
\end{definition}

Before we can get to the actual inference rules for our big step semantics, we
need to mention variations of $\vfEvalZCps$ and $\vfEvalPropCps$ which do
\emph{not} use the continuation passing style.

\begin{definition}
  \label{def:eval_Z_bool}

  Fixpoint function $\vfEvalZ: \vfExpr \to \vfStore \to \coqOption\ \coqZ$
  attempts to evaluate an expression $e$ to a value $z$. This method will
  evaluate in exactly the same way as the method $\vfEvalZCps$ from Definition
  \ref{def:eval_Z_cps}. The lack of the superscript $\coqT{c}$ denotes that it
  directly returns $z$ (as an $\coqOption$ value), instead of passing it to some
  continuation.

  Likewise, fixpoint function $\vfEvalBool: \vfExpr \to \vfStore \to \coqOption\
  \coqBool$ evaluates expression $e$ in a way comparable to $\vfEvalPropCps$
  from Definition \ref{def:eval_Prop_cps}; but it computes an actual boolean
  value, instead of a logical proposition and does not pass it on to a
  continuation.

  We omit precise definitions for both functions.
\end{definition}

\begin{definition}
  \label{def:cbsem_exec_stmt}

  The big step relation $\vfCbsemExecStmt{\sigma}{s}{\sigma'}{o}$ describes the
  \emph{finite execution} of statement $s$ starting from store $\sigma$,
  resulting in store $\sigma'$ with outcome $o$. It is inductively defined as:
  \begin{mathparpagebreakable}
    \inferrule
      {\\}{\vfCbsemExecStmt{\sigma}{\vfxSkip}{\sigma}{\vfOutN}}
    \and
    \inferrule
      {
        \vfCbsemExecStmt{\sigma}{s_1}{\sigma'}{\vfOutN} \\
        \vfCbsemExecStmt{\sigma'}{s_2}{\sigma''}{o}
      }{
        \vfCbsemExecStmt{\sigma}{(\vfxSeq{s_1}{s_2})}{\sigma''}{o}
      }
    \and
    \inferrule
      {
        \vfCbsemExecStmt{\sigma}{s_1}{\sigma'}{o} \\
        o \neq \vfOutN
      }{
        \vfCbsemExecStmt{\sigma}{(\vfxSeq{s_1}{s_2})}{\sigma'}{o}
      }
    \and
    \inferrule
      {
        \coqA{\sigma}{x} = \coqNone \\
        \coqAA{\vfEvalZ}{e}{\sigma} = \coqA{\coqSome}{z} \\
        \vfCbsemExecStmt{\vfStoreUpdate{x}{\coqA{\coqSome}{z}} \sigma}{s}{\sigma'}{o}
      }{
        \vfCbsemExecStmt{\sigma}{(\vfxLet{x}{e}{s})}{\vfStoreUpdate{x}{\coqNone} \sigma'}{o}
      }
    \and
    \inferrule
      {
        \coqAA{\vfEvalZ}{e}{\sigma} = \coqA{\coqSome}{z}
      }{
        \vfCbsemExecStmt{\sigma}{(\vfxBinop{\vfxVar{x}}{=}{e})}{\vfStoreUpdate{[\vfxVar{x}]}{\coqA{\coqSome}{z}} \sigma}{\vfOutN}
      }
    \and
    \inferrule
      {
        \coqAA{\vfEvalBool}{e}{\sigma} = \coqA{\coqSome}{\coqBoolTrue} \\
        \vfCbsemExecStmt{\sigma}{s_1}{\sigma'}{o}
      }{
        \vfCbsemExecStmt{\sigma}{(\vfxIf{e}{s_1}{s_2})}{\sigma'}{o}
      }
    \and
    \inferrule
      {
        \coqAA{\vfEvalBool}{e}{\sigma} = \coqA{\coqSome}{\coqBoolFalse} \\
        \vfCbsemExecStmt{\sigma}{s_2}{\sigma'}{o}
      }{
        \vfCbsemExecStmt{\sigma}{(\vfxIf{e}{s_1}{s_2})}{\sigma'}{o}
      }
    \and
    \inferrule
      {
        \coqAA{\vfEvalZ}{e}{\sigma} = \coqA{\coqSome}{z}
      }{
        \vfCbsemExecStmt{\sigma}{(\vfxReturn{e})}{\sigma}{\coqA{\vfOutR}{z}}
      }
    \and
    \inferrule
      {
        \coqAA{\vfEvalBool}{e_c}{\sigma} = \coqA{\coqSome}{\coqBoolTrue} \\
        \vfCbsemExecStmt{\sigma}{s}{\sigma'}{o} \\
        o \neq \vfOutN
      }{
        \vfCbsemExecStmt{\sigma}{(\vfxWhile{e_c}{e_i}{s})}{\sigma'}{o}
      }
    \and
    \inferrule
      {
        \coqAA{\vfEvalBool}{e_c}{\sigma} = \coqA{\coqSome}{\coqBoolTrue} \\
        \vfCbsemExecStmt{\sigma}{s}{\sigma'}{\vfOutN} \\
        \vfCbsemExecStmt{\sigma'}{(\vfxWhile{e_c}{e_i}{s})}{\sigma''}{o}
      }{
        \vfCbsemExecStmt{\sigma}{(\vfxWhile{e_c}{e_i}{s})}{\sigma''}{o}
      }
    \and
    \inferrule
      {
        \coqAA{\vfEvalBool}{e_i}{\sigma} = \coqA{\coqSome}{\coqBoolFalse}
      }{
        \vfCbsemExecStmt{\sigma}{(\vfxWhile{e_c}{e_i}{s})}{\sigma}{\vfOutN}
      }
    \and
    \inferrule
      {
        \vfCbsemExecStmt{\sigma}{s}{\sigma'}{o}
      }{
        \vfCbsemExecStmt{\sigma}{\vfxBlock{s}}{\sigma'}{o}
      }
  \end{mathparpagebreakable}
\end{definition}

We can now proceed to explore the relation between symbolic execution and our
semantics for terminating statements.

\begin{definition}
  \label{def:well_typed}

  The \emph{well-boundedness} predicate $\vfWellTyped: \vfStore \to \coqProp$ is
  defined as:
  \begin{align*}
    \coqA{\vfWellTyped}{\sigma} := \forall\ x\ z,\ \coqA{\sigma}{x} = \coqA{\coqSome}{z} \to \coqA{\vfIsInt}{z}.
  \end{align*}
  It asserts that all values in some store $\sigma$ have a value $z$ within
  integer bounds $\vfMinSigned \leq z \leq \vfMaxSigned$ (see also Definition
  \ref{def:for_Z}).
\end{definition}

\begin{lemma}
  \label{lem:symexec_stmt_sound_termination}

  Given a store $\sigma$ such that $\coqA{\vfWellTyped}{\sigma}$, if symbolic
  execution of some statement $s$ succeeds
  ($\coqAAAA{\vfSymExecStmt}{s}{C_N}{C_R}{\sigma}$) and a terminating execution
  can be derived in our semantics ($\vfCbsemExecStmt{\sigma}{s}{\sigma'}{o}$),
  then:
  \begin{align*}
    & \coqA{C_N}{\sigma'} && \text{if $o = \vfOutN$,} \\
    & \coqAA{C_R}{z}{\sigma'} && \text{if $o = \coqA{\vfOutR}{z}$.}
  \end{align*}
  This lemma makes it possible to use the outcome $o$ from an execution in
  $\vfCbsemExecStmtSymbol$ as a witness for the validity of either the normal or
  returning continuation in the equivalent symbolic execution.
\end{lemma}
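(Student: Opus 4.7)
The plan is to proceed by structural induction on the derivation of $\vfCbsemExecStmt{\sigma}{s}{\sigma'}{o}$, first generalizing over the continuations $C_N$ and $C_R$ and over the hypothesis $\vfWellTyped\ \sigma$ so that the induction hypothesis can be reapplied at the reshaped continuations and modified stores arising in compound constructs. As preparation I would prove three auxiliary facts: soundness of $\vfEvalZCps$ relative to $\vfEvalZ$ (if $\vfWellTyped\ \sigma$, $\vfEvalZ\ e\ \sigma = \coqSome\ z$, and $\vfEvalZCps\ e\ C\ \sigma$ all hold, then $\vfIsInt\ z$ and $C\ z\ \sigma$), the analogous soundness lemma relating $\vfEvalPropCps$ to $\vfEvalBool$, and preservation of $\vfWellTyped$ under the store updates done in the let-binding and assignment cases.

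The cases for $\vfxSkip$, blocks, the two sequencing rules, $\vfxReturn$, and plain assignment are largely mechanical: unfold $\vfSymExecStmt$ in the hypothesis, read off the corresponding SEP shape, and discharge it directly or by invoking the auxiliary lemmas and the induction hypothesis on subderivations. The conditional case uses the soundness of $\vfEvalPropCps$ to pick the appropriate conjunct of the branching SEP. The let-binding case requires only careful bookkeeping, since its SEP explicitly builds modified continuations $C'_N$ and $C'_R$ that reset $x$ to $\coqNone$ on exit, precisely mirroring the big-step rule's outgoing store.

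The main obstacle will be the $\vfxWhile{e_c}{e_i}{s}$ case, since symbolic execution unfolds the body only once in a havoced state satisfying the invariant, whereas the big-step derivation may unroll the loop arbitrarily many times. The key structural observation is that the inner SEP produced after the initial $\vfConsume$ of $e_i$ is essentially closed under the invariant: it holds for any store in which $e_i$ holds and the potentially modified variables take arbitrary integer values. I therefore plan to handle this case with an auxiliary induction on the while subderivation. For the rules in which the body executes, I would instantiate the $\vfHavocZs$ quantifier with the actual values at $\sigma$ (these lie in $\vfFreeTargets\ s$), select the $E_c$ conjunct of the SEP, and apply the body induction hypothesis with continuations $C_R$ and $\vfConsume\ e_i\ \vfLeakCheck$; for the rule that then continues into another iteration, the resulting $\vfConsume\ e_i\ \vfLeakCheck\ \sigma'$ supplies exactly the invariant at $\sigma'$ needed to rebuild the outer loop SEP at $\sigma'$ and feed the inner induction hypothesis. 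For the exit rule based on $\vfEvalBool\ e_i\ \sigma = \coqSome\ \coqBoolFalse$, care is needed because symbolic execution begins by consuming $e_i$ at the initial $\sigma$; I expect to need a mild strengthening of the lemma statement to propagate the invariant assumption through the sub-cases.
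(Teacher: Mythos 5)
Your proposal is correct and takes essentially the same route as the paper, whose proof is precisely an induction on the derivation of $\vfCbsemExecStmt{\sigma}{s}{\sigma'}{o}$ (chosen because of the while case); your generalization over the continuations, the auxiliary lemmas relating $\vfEvalZCps$/$\vfEvalPropCps$ to $\vfEvalZ$/$\vfEvalBool$, and the handling of the loop via the havoc instantiation and the invariant-consuming body continuation fill in details the paper leaves implicit. One minor remark: the ``mild strengthening'' you anticipate for the loop-exit case should not be necessary, since the $\vfConsume$ of $e_i$ at the head of the while SEP already yields the translated invariant at $\sigma$, and in the continuing case the body's normal continuation supplies it at $\sigma'$; what you do need to make explicit is the frame-style fact that executing the body changes only variables in $\coqA{\vfFreeTargets}{s}$ (and preserves well-boundedness), so that the havoc part of the SEP can be transported from $\sigma$ to $\sigma'$.
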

\begin{proof}
  The while loop requires us to proceed by induction on the derivation of
  $\vfCbsemExecStmt{\sigma}{s}{\sigma'}{o}$.
\end{proof}

\subsection{Diverging statements}
\label{subsection:cbseminf}

\begin{definition}
  \label{def:cbsem_execinf_stmt}

  The big step relation $\vfCbsemExecinfStmt{\sigma}{s}$ describes the
  \emph{diverging execution} of statement $s$ starting from store $\sigma$.
  Since it does not terminate, it is not associated with a final state or an
  outcome. It is coinductively defined as:

  \begin{mathparpagebreakable}
    \mprset{fraction={===}}
    \inferrule{
      \vfCbsemExecinfStmt{\sigma}{s_1}
    }{
      \vfCbsemExecinfStmt{\sigma}{(\vfxSeq{s_1}{s_2})}
    }
    \and
    \inferrule{
      \vfCbsemExecStmt{\sigma}{s_1}{\sigma'}{\vfOutN} \\
      \vfCbsemExecinfStmt{\sigma'}{s_2}
    }{
      \vfCbsemExecinfStmt{\sigma}{(\vfxSeq{s_1}{s_2})}
    }
    \and
    \inferrule{
      \coqA{\sigma}{x} = \coqNone \\
      \coqAA{\vfEvalZ}{e}{\sigma} = \coqA{\coqSome}{z} \\
      \vfCbsemExecinfStmt{\vfStoreUpdate{x}{\coqA{\coqSome}{z}} \sigma}{s}
    }{
      \vfCbsemExecinfStmt{\sigma}{(\vfxLet{x}{e}{s})}
    }
    \and
    \inferrule{
      \coqAA{\vfEvalBool}{e}{\sigma} = \coqA{\coqSome}{\coqBoolTrue} \\
      \vfCbsemExecinfStmt{\sigma}{s_1}
    }{
      \vfCbsemExecinfStmt{\sigma}{(\vfxIf{e}{s_1}{s_2})}
    }
    \and
    \inferrule{
      \coqAA{\vfEvalBool}{e}{\sigma} = \coqA{\coqSome}{\coqBoolFalse} \\
      \vfCbsemExecinfStmt{\sigma}{s_2}
    }{
      \vfCbsemExecinfStmt{\sigma}{(\vfxIf{e}{s_1}{s_2})}
    }
    \and
    \inferrule{
      \coqAA{\vfEvalBool}{e_c}{\sigma} = \coqA{\coqSome}{\coqBoolTrue} \\
      \vfCbsemExecinfStmt{\sigma}{s}
    }{
      \vfCbsemExecinfStmt{\sigma}{(\vfxWhile{e_c}{e_i}{s})}
    }
    \and
    \inferrule{
      \coqAA{\vfEvalBool}{e_c}{\sigma} = \coqA{\coqSome}{\coqBoolTrue} \\
      \vfCbsemExecStmt{\sigma}{s}{\sigma'}{\vfOutN} \\
      \vfCbsemExecinfStmt{\sigma'}{(\vfxWhile{e_c}{e_i}{s})}
    }{
      \vfCbsemExecinfStmt{\sigma}{(\vfxWhile{e_c}{e_i}{s})}
    }
    \and
    \inferrule{
      \vfCbsemExecinfStmt{\sigma}{s}
    }{
      \vfCbsemExecinfStmt{\sigma}{\vfxBlock{s}}
    }
  \end{mathparpagebreakable}
\end{definition}

We can now relate the coinductive relation $\vfCbsemExecinfStmtSymbol$ to
symbolic execution.

\begin{definition}
  \label{def:exec_stmt_not_derivable}

  Predicate $\vfCbsemExecStmtDerivable{\sigma}{s}$ expresses the \emph{ability}
  to derive a terminating execution for some statement $s$ starting from
  $\sigma$:
  \begin{align*}
    \vfCbsemExecStmtDerivable{\sigma}{s}\ := \exists\ \sigma'\ o,\ \vfCbsemExecStmt{\sigma}{s}{\sigma'}{o}.
  \end{align*}
  The \emph{inability} to derive a terminating execution, expressed $\neg
  (\vfCbsemExecStmtDerivable{\sigma}{s})$, may stem from either stuckness (for
  instance, in the case of division-by-zero) or from divergence.
\end{definition}

\begin{lemma}
  \label{lem:symexec_stmt_sound_divergence}

  Given a store $\sigma$ such that $\vfWellTyped{\sigma}$, if symbolic execution
  of some statement $s$ succeeds
  ($\coqAAAA{\vfSymExecStmt}{s}{C_N}{C_R}{\sigma}$) but a terminating execution
  cannot be derived using \cbsem\ (in other words, $\neg
  (\vfCbsemExecStmtDerivable{\sigma}{s})$), then we can derive a diverging
  execution $\vfCbsemExecinfStmt{\sigma}{s}$.
\end{lemma}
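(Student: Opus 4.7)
The plan is to apply the coinduction principle for $\vfCbsemExecinfStmtSymbol$ using as invariant the predicate
\[
X(\sigma, s) := \vfWellTyped\ \sigma\ \land\ \bigl(\exists\, C_N\, C_R,\ \coqAAAA{\vfSymExecStmt}{s}{C_N}{C_R}{\sigma}\bigr)\ \land\ \neg\,\vfCbsemExecStmtDerivable{\sigma}{s}.
\]
Under the hypotheses of the lemma, $X(\sigma, s)$ holds. The coinductive step requires us to show that whenever $X(\sigma, s)$ holds, some rule of $\vfCbsemExecinfStmtSymbol$ applies whose recursive premises are themselves instances of $X$, and whose non-recursive premises are already-derivable terminating executions or expression evaluations known to succeed.

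The proof then proceeds by case analysis on $s$. The cases $\vfxSkip$, $\vfxReturn{e}$, and the direct assignment $\vfxBinop{\vfxVar{x}}{=}{e}$ always admit a terminating derivation (since success of $\vfSymExecStmt$ combined with $\vfWellTyped$ guarantees that the relevant calls to $\vfEvalZ$ return a $\coqSome$), and therefore directly contradict $\neg\,\vfCbsemExecStmtDerivable{\sigma}{s}$. The $\vfxLet{x}{e}{s'}$, $\vfxIf{e}{s_1}{s_2}$, and $\vfxBlock{s'}$ cases are handled by inspecting the shape of $\vfSymExecStmt$: that shape already encodes the side conditions used as premises of the corresponding coinductive rules ($\sigma(x) = \coqNone$, definedness of $\vfEvalZ$ or $\vfEvalBool$, and selection of a branch), and the only sub-statement that could avoid producing a terminating derivation is the one on which we recurse through $X$.

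For the sequence $\vfxSeq{s_1}{s_2}$ we split on whether $s_1$ admits a terminating execution. If it does not, $X(\sigma, s_1)$ holds immediately and we apply the first coinductive sequence rule. If $s_1$ does terminate, the outcome must be $\vfOutN$ with some $\sigma'$, since a $\coqA{\vfOutR}{z}$ outcome would yield termination of the whole sequence and contradict non-derivability; we then need $X(\sigma', s_2)$. Well-typedness is preserved by the terminating semantics, non-derivability of $s_2$ from $\sigma'$ follows from non-derivability of the whole sequence, and the symbolic-execution part is exactly where Lemma \ref{lem:symexec_stmt_sound_termination} is used: $\vfSymExecStmt$ on a sequence threads the symexec of $s_2$ as the normal continuation for $s_1$, so the normal-continuation conclusion of that lemma delivers symexec success of $s_2$ at $\sigma'$.

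The main obstacle is the $\vfxWhile{e_c}{e_i}{s'}$ case, where the symbolic-execution shape (consume invariant, havoc, produce invariant, branch on $e_c$) does not line up term-by-term with the coinductive while rules. Success of symexec forces $\coqAA{\vfEvalBool}{e_c}{\sigma} = \coqA{\coqSome}{\coqBoolTrue}$, since otherwise the loop would admit a terminating derivation and contradict non-derivability. We then dispatch on whether the body $s'$ admits a terminating execution. If it does not, unfolding $\vfSymExecStmt$ on the while furnishes continuations witnessing $X(\sigma, s')$, and we apply the first coinductive while rule. If it does, the outcome must be $\vfOutN$ with some $\sigma'$ (a return would terminate the whole loop), and we must re-establish $X(\sigma', \vfxWhile{e_c}{e_i}{s'})$ so as to apply the second coinductive while rule. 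This is the delicate step: Lemma \ref{lem:symexec_stmt_sound_termination} applied with the body's normal continuation $\coqAA{\vfConsume}{e_i}{\vfLeakCheck}$ tells us that the invariant $e_i$ holds at $\sigma'$, and combined with the universal quantification introduced by $\vfHavocZs$ in the original symexec of the loop we can reconstruct a symexec derivation of the entire while from $\sigma'$, closing the coinductive loop.
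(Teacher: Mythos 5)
Your proposal is correct and takes essentially the same route as the paper's proof: a coinductive argument by case analysis on $s$, discharging the non-recursive cases \emph{ex falso} against $\neg(\vfCbsemExecStmtDerivable{\sigma}{s})$, deciding whether the first component of a sequence (resp.\ the loop body) terminates, and re-establishing well-typedness and the symbolic-execution premise at the intermediate store via Lemma~\ref{lem:symexec_stmt_sound_termination} before the guarded corecursive step. The only point you leave implicit, and which the paper stresses, is that the split $(\vfCbsemExecStmtDerivable{\sigma}{s'}) \lor \neg(\vfCbsemExecStmtDerivable{\sigma}{s'})$ is not decidable and must be obtained from an axiomatic instance of the excluded middle.
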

\begin{proof}
  The proof is coinductive and proceeds by case analysis on $s$. Many cases can
  be discharged \emph{ex falso} by construction of a terminating derivation,
  contradicting the assumption $\neg (\vfCbsemExecStmtDerivable{\sigma}{s})$.

  For the case of a while loop where the loop body $s'$ is executed, we need to
  \emph{decide} whether the loop body terminates from $\sigma$:
  $(\vfCbsemExecStmtDerivable{\sigma}{s'}) \lor \neg
  (\vfCbsemExecStmtDerivable{\sigma}{s'})$. Because of the halting problem, we
  are forced at this point to employ an axiomatic instance of the excluded
  middle, leading to two subcases corresponding to the two rules for diverging
  while loops in Definition \ref{def:cbsem_execinf_stmt}.
\end{proof}

Lemma \ref{lem:symexec_stmt_sound_divergence} states that if we cannot make a
terminating derivation for some $s$ starting from $\sigma$, then the evidence
provided by successful symbolic execution of $s$ allows us to narrow down
stuckness and divergence as causes for $\neg
(\vfCbsemExecStmtDerivable{\sigma}{s})$ to just divergence.

\subsection{Soundness of symbolic execution wrt \cbsem}
\label{subsection:symexec_cbsem}

\begin{definition}
  \label{def:store_eq_mod_xs}

  Two stores $\sigma$ and $\sigma'$ are \emph{equivalent modulo a list of
  identifiers $\overline{x}$}, written $\sigma \equiv_{\overline{x}} \sigma'$,
  if they satisfy $\forall\ x, (x \in \overline{x}) \lor (\coqA{\sigma}{x} =
  \coqA{\sigma'}{x})$.
\end{definition}

\begin{definition}
  \label{def:binds}

  Store $\sigma$ \emph{binds} variables $\overline{x}$, notation $\overline{x}
  \subseteq \coqA{\vfDom}{\sigma}$, if $\sigma\ x \neq \coqNone$ for $x \in
  \overline{x}$.
\end{definition}

Since we have implemented stores as functions, a list of variables bound by a
store must be passed around explicitly in various places. This is a bit of a
nuisance and will be fixed in future versions of this work, by representing
stores as finite maps rather than functions.

\begin{definition}
  \label{def:cbsem_exec_func_correct}

  The predicate $\vfCbsemFunc: \coqList\ \coqString \to \vfExpr \to \vfStmt \to
  \vfExpr \to \coqProp$ for a function with arguments $\overline{x}$, pre- and
  postconditions $e_p$ and $e_q$ and function body $s$ expresses correctness of
  that function with regards to \cbsem. Formally, this predicate asserts that:
  \begin{align*}
    & \coqs{0} \coqAAA{\vfTranslateProp}{e_p}{(\coqL{P}{ \\
    & \coqs{1}   P \to (\\
    & \coqs{2}     \exists\ \sigma,\ z, \coqAAA{\vfTranslateProp}{e_q}{(\coqL{Q}{(\vfCbsemExecStmt{\sigma}{s}{\sigma'}{\coqA{\vfOutR}{z}}) \land Q})}{\vfStoreUpdate{\coqT{"result"}}{z} \sigma} \\
    & \coqs{1}   ) \lor (\vfCbsemExecinfStmt{\sigma}{s}) \\
    & \coqs{0} })}{\sigma},
  \end{align*}
  for any store $\sigma$ such that $\coqA{\vfWellTyped}{\sigma}$, $\vfEmptyStore
  \equiv_{\overline{x}} \sigma$ and $\overline{x} \subseteq \coqA{\vfDom}{\sigma}$.
\end{definition}

$\vfCbsemFunc$ expresses that a function is correct with regards to
\cbsem\ if, given that the precondition holds for the call arguments, it
either terminates with a return outcome which satisfies the postcondition, or
diverges. It excludes any undefined behavior and it does not allow termination
with a normal outcome.

We will now conclude this section by proving soundness of the earlier notion of
function correctness from symbolic execution, $\vfSymExecFunc$, to
$\vfCbsemFunc$. The proof for this requires a similar soundness lemma for
statements.

\begin{lemma}[Soundness for statements]
  \label{lem:symexec_stmt_sound}

  If $\coqAAAA{\vfSymExecStmt}{s}{C_N}{C_R}{\sigma}$ holds with
  $\coqA{\vfWellTyped}{\sigma}$, then either:
  \begin{enumerate}
    \item
      $s$ terminates normally; that is, there exists some $\sigma'$ such that we
      can derive $\vfCbsemExecStmt{\sigma}{s}{\sigma'}{\vfOutN}$ and
      $\coqA{C_N}{\sigma'}$ holds;
    \item
      $s$ terminates by returning a value; that is, there exists some $\sigma'$
      and $z$ such that we can derive
      $\vfCbsemExecStmt{\sigma}{s}{\sigma'}{\coqA{\vfOutR}{z}}$ and
      $\coqAA{C_R}{z}{\sigma'}$ holds;
    \item
      $s$ diverges; that is, $\vfCbsemExecinfStmt{\sigma}{s}$.
  \end{enumerate}
\end{lemma}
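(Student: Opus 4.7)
The plan is to combine the two soundness lemmas already established for terminating and diverging executions (Lemmas~\ref{lem:symexec_stmt_sound_termination} and~\ref{lem:symexec_stmt_sound_divergence}) by performing a case analysis on whether a terminating derivation exists at all. Concretely, I would invoke classical excluded middle on the proposition $\vfCbsemExecStmtDerivable{\sigma}{s}$, which is admissible since the same axiom is already used in the divergence lemma (to cope with the halting problem for while loops).

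In the first branch, where $\vfCbsemExecStmtDerivable{\sigma}{s}$ holds, unfolding Definition~\ref{def:exec_stmt_not_derivable} gives us a concrete $\sigma'$ and $o$ with $\vfCbsemExecStmt{\sigma}{s}{\sigma'}{o}$. I would then apply Lemma~\ref{lem:symexec_stmt_sound_termination} to the hypotheses $\coqAAAA{\vfSymExecStmt}{s}{C_N}{C_R}{\sigma}$, $\coqA{\vfWellTyped}{\sigma}$, and this derivation. The lemma's conclusion is a case split on the shape of $o$: if $o = \vfOutN$ we get $\coqA{C_N}{\sigma'}$, matching case~(1) of the disjunction; if $o = \coqA{\vfOutR}{z}$ we get $\coqAA{C_R}{z}{\sigma'}$, matching case~(2). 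In the second branch, where $\neg(\vfCbsemExecStmtDerivable{\sigma}{s})$, I would apply Lemma~\ref{lem:symexec_stmt_sound_divergence} directly to obtain $\vfCbsemExecinfStmt{\sigma}{s}$, discharging case~(3).

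The proof itself is essentially an assembly step: the real work has already been done in the two auxiliary lemmas. There is no need for induction on $s$ here, since neither the hypothesis nor the goal mention the structure of $s$ beyond what those lemmas already exploit. The main technical point to watch out for is the classical case split: I expect the proof script will need to import a suitable excluded-middle axiom (the same one used in Lemma~\ref{lem:symexec_stmt_sound_divergence}) and explicitly instantiate it at $\vfCbsemExecStmtDerivable{\sigma}{s}$. Aside from that, the only minor step is destructing the outcome $o$ returned by Lemma~\ref{lem:symexec_stmt_sound_termination} to match cases~(1) and~(2) of the goal.

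No genuine obstacle is anticipated, since both underlying lemmas have been stated with exactly the hypotheses we have in hand. If anything, the mild awkwardness is purely bookkeeping: packing the conclusions of the auxiliary lemmas into the three-way disjunction in the stated order, and ensuring that the well-boundedness hypothesis $\coqA{\vfWellTyped}{\sigma}$ is threaded through both applications.
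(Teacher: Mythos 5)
Your proposal is correct and matches the paper's proof exactly: a classical case split on $(\vfCbsemExecStmtDerivable{\sigma}{s}) \lor \neg(\vfCbsemExecStmtDerivable{\sigma}{s})$, with the terminating branch handled by a case analysis on the outcome and Lemma~\ref{lem:symexec_stmt_sound_termination}, and the diverging branch by Lemma~\ref{lem:symexec_stmt_sound_divergence}. No differences worth noting.
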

\begin{proof}
  Similar to what we encountered in the proof of Lemma
  \ref{lem:symexec_stmt_sound_divergence} for the case of the while loop, the
  halting problem forces us to commence this proof with a case analysis on the
  axiomatic tautology $(\vfCbsemExecStmtDerivable{\sigma}{s'})\ \lor\ \neg
  (\vfCbsemExecStmtDerivable{\sigma}{s'})$. The terminating case proceeds by
  case analysis of the outcome of terminating execution, followed by application
  of Lemma \ref{lem:symexec_stmt_sound_termination}. The diverging case depends
  on Lemma \ref{lem:symexec_stmt_sound_divergence}.
\end{proof}

\begin{theorem}[Soundness for functions]
  \label{thm:symexec_func_sound}

  Given a function with arguments $\overline{x}$, pre- and postconditions $e_p$
  and $e_q$ and function body $s$, then
  $\coqAAAA{\vfSymExecFunc}{\overline{x}}{e_p}{s}{e_q}$ implies
  $\coqAAAA{\vfCbsemFunc}{\overline{x}}{e_p}{s}{e_q}$.
\end{theorem}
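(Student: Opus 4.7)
The plan is to unfold both sides of the implication and then delegate most of the semantic work to Lemma~\ref{lem:symexec_stmt_sound}, treating the rest as careful bookkeeping between the symbolic store built up by $\vfForZs$ and the store $\sigma$ provided by $\vfCbsemFunc$. Concretely, I would begin by assuming an arbitrary store $\sigma$ with $\coqA{\vfWellTyped}{\sigma}$, $\vfEmptyStore \equiv_{\overline{x}} \sigma$, $\overline{x} \subseteq \coqA{\vfDom}{\sigma}$, and unfolding $\vfCbsemFunc$ so that the goal becomes a disjunction of ``terminates with a returning outcome satisfying $e_q$'' and ``diverges,'' conditional on the proposition $P$ produced by $\vfTranslateProp\ e_p$.

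Next I would unfold $\vfSymExecFunc$ from the hypothesis. Since $\sigma$ binds exactly the identifiers in $\overline{x}$ to integer values (using $\vfWellTyped$ and the $\vfDom$ hypothesis), I can successively instantiate the universal quantifiers introduced by $\vfForZs$ with those values, discharging each $\coqT{is_int}$ side condition from $\vfWellTyped$. A short lemma (or direct calculation) shows that applying $\vfStoreUpdate{}{}$ in this order to $\vfEmptyStore$ produces a store that agrees with $\sigma$ pointwise, using $\vfEmptyStore \equiv_{\overline{x}} \sigma$ to handle the variables outside $\overline{x}$. Then I would introduce the assumption $P$ from $\vfTranslateProp\ e_p$ via the $\vfProduce$ clause, which is what the $\vfCbsemFunc$ goal also gives us as hypothesis.

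At this point the hypothesis reduces to $\coqAAAA{\vfSymExecStmt}{s}{(\coqL{\_}{\coqFalse})}{(\coqAA{\vfRet}{\sigma}{(\coqAA{\vfConsume}{e_q}{\vfLeakCheck})})}{\sigma}$. Applying Lemma~\ref{lem:symexec_stmt_sound} under $\coqA{\vfWellTyped}{\sigma}$ yields three cases:
\begin{enumerate}
  \item $s$ terminates normally with some $\sigma'$ and $\coqA{C_N}{\sigma'}$; since $C_N = \coqL{\_}{\coqFalse}$, this case is immediately dispatched \emph{ex falso}.
  \item $s$ terminates by returning $z$ with $\coqAA{C_R}{z}{\sigma'}$; unfolding $\vfRet$, $\vfConsume$, and $\vfLeakCheck$ rewrites this as $\coqAAA{\vfTranslateProp}{e_q}{(\coqL{Q}{Q \land \coqTrue})}{\vfStoreUpdate{\coqT{"result"}}{\coqA{\coqSome}{z}} \sigma}$, from which I extract a witness $Q$ together with its validity, choose the left (terminating) disjunct of the $\vfCbsemFunc$ goal, and use the returning derivation $\vfCbsemExecStmt{\sigma}{s}{\sigma'}{\coqA{\vfOutR}{z}}$ as the required witness.
  \item $s$ diverges, giving $\vfCbsemExecinfStmt{\sigma}{s}$ directly, which discharges the right disjunct of the $\vfCbsemFunc$ goal.
\end{enumerate}

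The main obstacle is not any of these case analyses but the store-matching step: I have to convince Coq that the store obtained by iteratively updating $\vfEmptyStore$ with the values of $\sigma$ on $\overline{x}$ is the \emph{same} store (or at least extensionally equivalent, which suffices since stores are functions) as $\sigma$ itself, so that the conclusion of $\vfSymExecStmt$ can be fed into Lemma~\ref{lem:symexec_stmt_sound} with $\sigma$ as its store and the resulting semantic derivation can in turn be used as a witness for $\vfCbsemFunc$. This is routine but irritating; a small auxiliary lemma establishing that $\vfForZs$ populates $\vfEmptyStore$ to a store equivalent to any $\sigma$ with the appropriate $\vfDom$ and equivalence-modulo properties would isolate this bookkeeping and make the rest of the proof essentially mechanical.
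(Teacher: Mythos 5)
Your proposal is correct and follows essentially the same route as the paper's proof: apply Lemma~\ref{lem:symexec_stmt_sound} to the function body, dismiss the normal-termination case via the $\coqL{\_}{\coqFalse}$ continuation, and read off the returning and diverging disjuncts of $\vfCbsemFunc$ from the other two cases. The store-matching bookkeeping you flag (instantiating the $\vfForZs$ quantifiers so the symbolically built store agrees with the given $\sigma$) is left implicit in the paper but is exactly the routine glue its proof elides.
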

\begin{proof}
  The proof proceeds by examining the three cases provided by Lemma
  \ref{lem:symexec_stmt_sound} for symbolic execution of the function body $s$.
  The case for normal \emph{termination} of the function body can be easily
  dispelled, because, as Definition \ref{def:sym_exec_func} shows, the normal
  \emph{continuation} for $s$ in $\vfSymExecFunc$ is $\coqL{\_}{\coqFalse}$. The
  other two cases are trivial.
\end{proof}

% % % % % % % % % % % % % % % % % % % % % % % % % % % % % % % % % % % % % % % %

\section{Proving correctness in Clight big step semantics}
\label{section:clight}

Theorem \ref{thm:symexec_func_sound} proves that symbolic execution of a
function allows us to conclude semantic correctness of that function with
regards to its pre- and postcondition within \cbsem. But can we be
confident that \cbsem\ itself is a sound semantics for function execution?

In this section we will prove soundness of \cbsem\ with regards to Clight for a
subset of \vfcx. This further narrowing of \vfcx\ is mainly restricted in terms
of some operations for boolean expressions (we drop $\vfxBinop{e_1}{==}{e_2}$,
$\vfxBinop{e_1}{\neq}{e_2}$, $\vfxBinop{e_1}{\ \&\&\ }{e_2}$,
$\vfxBinop{e_1}{\ ||\ }{e_2}$ and $\vfxUnop{!}{e}$) and block statements
$\vfxBlock{s}$. The resulting subset of \vfcx\ remains expressive enough for
meaningful functions with loops, conditionals and the capacity for displaying
undefined behavior, including the example from Section \ref{section:example}.

First we will draw relations between VeriFast stores
(\ref{subsection:store_tenv_equivalence}), expressions
(\ref{subsection:vf_cl_expression_equivalence}) and statements
(\ref{subsection:vf_cl_statement_equivalence}) and their counterparts in Clight.
Next, we will introduce two small transformations which are necessary to line up
the ASTs exported from VeriFast with those exported by \texttt{clightgen}
(\ref{subsection:statement_transformations}). Finally, because we support only a
limited subset of C, we define custom notions of program correctness in both
VeriFast and Clight and provide our second main result, proving that proving
correctness of program execution in \cbsem\ implies correctness of program
execution in Clight (\ref{subsection:vf_cl_programs}).

\subsection{Stores and temporary environments}
\label{subsection:store_tenv_equivalence}

In \vfcx, local variable values are stored in a store $\sigma$. But the local
variable $\vfxVar{x}$ introduced by $\vfxLet{x}{e}{s}$ exist only for the
duration of statement $s$. Once $s$ terminates, the variable goes out of scope,
meaning that the value for $\vfxVar{x}$ in the resulting store is $\coqNone$.

CompCert has two mechanisms for local variables, but we use the so-called
\emph{temporary variables}, which are stored in a \emph{temporary environment}
$\clTenv \in \coqT{temp_env}$. A major difference with local variables in \vfcx\
is that temporaries in CompCert exist from the very beginning of a function.
This means there is no ``unsetting'' of temporary variables, e.g. when exiting
some block.

Because of this important difference, we need to track a \emph{partial}
correspondence between a VeriFast store $\sigma$ and a Clight temporary
environment $\clTenv$ throughout an execution, limited to the variables which
are currently in scope \emph{in the VeriFast program}.

\begin{definition}
  \label{def:st_tenv_rel}
  The relation $\vfclCoveredBySymbol: \vfStore \to \clT{temp_env} \to \coqProp$
  is inductively defined as:
  \begin{mathparpagebreakable}
    \inferrule{\ }{
      \vfclCoveredBy{\vfEmptyStore}{\clTenv}
    }
    \and
    \inferrule{
      i_\CL = \coqA{\coqT{Int.repr}}{z} \\
      \vfclCoveredBy{\sigma}{\clTenv}
    }{
      \vfclCoveredBy{
        \vfStoreUpdate{x}{\coqA{\coqSome}{z}} \sigma
      }{
        (\coqAAA{\clT{PTree.set}}{\clId{x}}{(\coqA{\clT{Vint}}{i_\CL})}{\clTenv})
      }
    }
    \and
    \inferrule{
      \vfclCoveredBy{\sigma}{\clTenv}
    }{
      \vfclCoveredBy{\vfStoreUpdate{x}{\coqNone} \sigma}{\clTenv}
    }
  \end{mathparpagebreakable}
  In the above definition, CompCert's $\coqT{PTree.set}$ function updates a
  temporary variable identified by $\$ x$ (mapping string $x$ to some internal
  identifier representation) in $\clTenv$ with a CompCert $\clT{Int}$ value
  built from a value $z$.
\end{definition}

\subsection{Expressions}
\label{subsection:vf_cl_expression_equivalence}

For expressions, the correspondence between VeriFast and Clight for the
supported subset of \vfcx\ is described by two straightforward inductive
relations, one for integer evaluation and another one for boolean evaluation, in
line with the definitions for $\vfEvalZ$ and $\vfEvalBool$ from Definition
\ref{def:eval_Z_bool}.

\begin{definition}
  \label{def:expr_equiv_int}

  The relation $\vfclEquivIntSymbol: \vfExpr \to \coqT{Clight.expr} \to
  \coqProp$ is inductively defined as:
  \begin{mathparpagebreakable}
    \inferrule{
      i_\CL = \coqA{\coqT{Int.repr}}{[\vfxVar{z}]}
    }{
      \vfclEquivInt{\vfxVar{z}}{\coqAA{\clEconstInt}{i_\CL}{\clTint}}
    }
    \and
    \inferrule{\ }{
      \vfclEquivInt{\vfxVar{x}}{\coqAA{\clT{Etempvar}}{\$ [\vfxVar{x}]}{\clTint}}
    }
    \and
    \inferrule{
      \vfclEquivInt{e_1}{e_{\CL,1}} \\
      \vfclEquivInt{e_2}{e_{\CL,2}}
    }{
      \vfclEquivInt{(\vfxBinop{e_1}{+}{e_2})}{(\coqAAAA{\clEbinop}{\clT{Oadd}}{e_{\CL,1}}{e_{\CL,2}}{\clTint})}
    }
    \and
    \inferrule{
      \vfclEquivInt{e_1}{e_{\CL,1}} \\
      \vfclEquivInt{e_2}{e_{\CL,2}}
    }{
      \vfclEquivInt{(\vfxBinop{e_1}{-}{e_2})}{(\coqAAAA{\clEbinop}{\clT{Osub}}{e_{\CL,1}}{e_{\CL,2}}{\clTint})}
    }
    \and
    \inferrule{
      \vfclEquivInt{e_1}{e_{\CL,1}} \\
      \vfclEquivInt{e_2}{e_{\CL,2}}
    }{
      \vfclEquivInt{(\vfxBinop{e_1}{/}{e_2})}{(\coqAAAA{\clEbinop}{\clT{Odiv}}{e_{\CL,1}}{e_{\CL,2}}{\clTint})}
    }
  \end{mathparpagebreakable}
\end{definition}

\begin{lemma}[Soundness of integer evaluation]
  \label{lem:vf_cl_expr_to_int_sound}

  Assume a store $\sigma$ with $\coqA{\vfWellTyped}{\sigma}$ and a temporary
  environment $\clTenv$, such that $\vfclCoveredBy{\sigma}{\clTenv}$. Let $e$
  and $e_\CL$ be expressions such that $\vfclEquivInt{e}{e_\CL}$. Then
  $\coqAA{\vfEvalZ}{e}{\sigma} = \coqA{\coqSome}{z}$ implies that $e_\CL$
  evaluates to $\coqA{\clT{Vint}}{(\coqA{\clT{Int.repr}}{z})}$ by Clight's
  $\clT{eval_expr}$ semantics, starting from $\clTenv$.
\end{lemma}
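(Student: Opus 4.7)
The plan is to prove this lemma by induction on the derivation of $\vfclEquivInt{e}{e_\CL}$ (equivalently, by structural induction on $e$, since the relation is syntax-directed in the first argument). Each constructor of $\vfclEquivIntSymbol$ lines up with a specific Clight $\clT{eval_expr}$ rule, so each case reduces to unpacking the hypothesis $\coqAA{\vfEvalZ}{e}{\sigma} = \coqA{\coqSome}{z}$, extracting the intermediate values it produces, and then assembling the corresponding Clight derivation using those values, relying on $\vfWellTyped$ and $\vfclCoveredBySymbol$ as needed.

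For the integer-literal case, the rule for $\vfxIntLit{z}$ gives $z = [\vfxIntLit{z}]$ immediately, and Clight's rule for $\clEconstInt$ evaluates to $\coqA{\clT{Vint}}{i_\CL}$ with $i_\CL = \coqA{\clT{Int.repr}}{[\vfxIntLit{z}]}$, which matches by construction of the equivalence. The variable case is where the hypothesis $\vfclCoveredBy{\sigma}{\clTenv}$ becomes essential. I would first prove an auxiliary lemma stating that whenever $\coqA{\sigma}{x} = \coqA{\coqSome}{z}$ and $\vfclCoveredBy{\sigma}{\clTenv}$ hold, $\clTenv$ maps $\$ x$ to $\coqA{\clT{Vint}}{(\coqA{\clT{Int.repr}}{z})}$. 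This auxiliary follows by induction on the derivation of $\vfclCoveredBySymbol$, with the $\coqNone$-update constructor handled trivially (the hypothesis $\coqA{\sigma}{x} = \coqA{\coqSome}{z}$ rules it out for the variable in question) and the $\coqSome$-update case handled by a case split on whether the updated identifier is $x$ or not.

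The arithmetic cases ($+$, $-$, $/$) all follow the same pattern. After unfolding $\vfEvalZ$ in the hypothesis, I extract subvalues $z_1$ and $z_2$ together with the side conditions embedded in its definition: for $+$ and $-$, the bounds $\vfMinSigned \le z_1 \pm z_2 \le \vfMaxSigned$; for $/$, both $z_2 \neq 0$ and $\neg (z_1 = \vfMinSigned \land z_2 = -1)$. The induction hypotheses then deliver Clight evaluations of the subexpressions to $\coqA{\clT{Vint}}{(\coqA{\clT{Int.repr}}{z_1})}$ and $\coqA{\clT{Vint}}{(\coqA{\clT{Int.repr}}{z_2})}$, and applying Clight's rule for $\clEbinop$ leaves the goal of showing that CompCert's $\clT{sem_binary_operation}$ on these two $\clT{Vint}$ values produces $\coqA{\clT{Vint}}{(\coqA{\clT{Int.repr}}{(z_1 \mathbin{\star} z_2)})}$ for the corresponding CompCert operator.

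The main obstacle will be precisely this last step: bridging abstract $\coqZ$-arithmetic as used by $\vfEvalZ$ with CompCert's machine-integer semantics on $\clT{Int}$. For $+$ and $-$ this amounts to pushing $\coqA{\clT{Int.repr}}{}$ through addition and subtraction and using the in-range hypotheses to rule out modular wrap-around, which is available from the standard CompCert integer library. Division is the subtle one: CompCert's $\clT{sem_div}$ on signed $32$-bit integers explicitly fails when the divisor represents zero or when it is the $\vfMinSigned \div (-1)$ overflow, and these two cases are exactly what the $\vfxBinop{\_}{/}{\_}$ clause of $\vfEvalZ$ rules out. Matching these side conditions precisely, and working through the intermediate $\clT{classify_binarith}$ / $\clT{sem_binarith}$ machinery that Clight inserts between $\clT{eval_Ebinop}$ and the raw $\clT{Int.divs}$ call, is where most of the proof effort will concentrate, though no deep insight beyond careful bookkeeping should be required.
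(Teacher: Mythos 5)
Your proposal is correct and follows essentially the same route as the paper: induction on the derivation of $\vfclEquivInt{e}{e_\CL}$, with the real work being the unpacking of CompCert's evaluation and machine-integer definitions and showing that the side conditions and boundedness facts supplied by $\vfEvalZ$ (together with $\vfWellTyped$ and the store/temp-environment correspondence) discharge CompCert's requirements, including the zero-divisor and $\vfMinSigned/(-1)$ cases for division. The auxiliary lemma you introduce for the variable case is just the expected unfolding of the $\vfclCoveredBySymbol$ relation and is consistent with the paper's (very terse) proof sketch.
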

\begin{proof}
  By induction on $\vfclEquivInt{e}{e_\CL}$. The main issues are the unpacking
  of CompCert's definitions for evaluation and integer values and proving that
  VeriFast's $\vfEvalZ$ implies CompCert's boundedness of the results.
\end{proof}

\begin{definition}
  \label{def:expr_equiv_b}

  The relation $\vfclEquivBoolSymbol: \vfExpr \to \coqT{Clight.expr} \to
  \coqProp$ is inductively defined as:
  \begin{mathparpagebreakable}
    \inferrule{\ }{
      \vfclEquivBool{\vfxTrue}{(\coqAA{\clEconstInt}{\clT{Int.one}}{\clTint})}
    }
    \and
    \inferrule{\ }{
      \vfclEquivBool{\vfxFalse}{(\coqAA{\clEconstInt}{\clT{Int.zero}}{\clTint})}
    }
    \and
    \inferrule{
      \vfclEquivInt{e_1}{e_{\CL,1}} \\
      \vfclEquivInt{e_2}{e_{\CL,2}}
    }{
      \vfclEquivBool{
        (\vfxBinop{e_1}{<}{e_2})
      }{
        (\coqAAAA{\clEbinop}{\clT{Olt}}{e_{\CL,1}}{e_{\CL,2}}{\clTint})
      }
    }
    \and
    \inferrule{
      \vfclEquivInt{e_1}{e_{\CL,1}} \\
      \vfclEquivInt{e_2}{e_{\CL,2}}
    }{
      \vfclEquivBool{
        (\vfxBinop{e_1}{\leq}{e_2})
      }{
        (\coqAAAA{\clEbinop}{\clT{Ole}}{e_{\CL,1}}{e_{\CL,2}}{\clTint})
      }
    }
  \end{mathparpagebreakable}
\end{definition}

\begin{lemma}[Soundness of boolean evaluation]
  \label{lem:vf_cl_expr_to_bool_sound}

  Assume a store $\sigma$ with $\coqA{\vfWellTyped}{\sigma}$ and a temporary
  environment $\clTenv$, such that $\vfclCoveredBy{\sigma}{\clTenv}$. Let $e$
  and $e_\CL$ be expressions such that $\vfclEquivBool{e}{e_\CL}$. Then
  $\coqAA{\vfEvalBool}{e}{\sigma} = \coqA{\coqSome}{b}$, with $b$ a Coq boolean,
  implies that there is some Clight value $b_\CL$, such that:
  \begin{enumerate}
    \item Clight's $\clT{eval_expr}$ evaluates expression $e_\CL$ to value $b_\CL$;
    \item and Clight's $\clT{bool_val}$ function evaluates value $b_\CL$ to
    $\coqA{\coqSome}{b}$.
  \end{enumerate}
  In other words, if \vfcx\ evaluates a VeriFast expression $e$ to a boolean
  value $b$, then Clight evaluates the corresponding expression $e_\CL$ to a
  corresponding value $b_\CL$.
\end{lemma}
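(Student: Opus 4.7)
The plan is to proceed by induction on the derivation of $\vfclEquivBool{e}{e_\CL}$, which splits into four cases matching the four constructors of that relation: $\vfxTrue$, $\vfxFalse$, $\vfxBinop{e_1}{<}{e_2}$ and $\vfxBinop{e_1}{\leq}{e_2}$. In each case I need to exhibit a Clight witness value $b_\CL$ and verify both conjuncts separately: that Clight's $\clT{eval_expr}$ sends $e_\CL$ to $b_\CL$ in $\clTenv$, and that $\clT{bool_val}$ sends $b_\CL$ to $\coqA{\coqSome}{b}$.

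The two literal cases are discharged directly. For $\vfxTrue$, take $b_\CL := \coqA{\clT{Vint}}{\clT{Int.one}}$: Clight's constant-integer rule delivers this value from the expression $\coqAA{\clEconstInt}{\clT{Int.one}}{\clTint}$, and $\clT{bool_val}$ on that value reduces to $\coqA{\coqSome}{\coqBoolTrue}$, which matches the output of $\vfEvalBool$ on $\vfxTrue$. The $\vfxFalse$ case is symmetric, using $\clT{Int.zero}$ and $\coqBoolFalse$.

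For a comparison such as $\vfxBinop{e_1}{<}{e_2}$, I first unfold the hypothesis $\coqAA{\vfEvalBool}{(\vfxBinop{e_1}{<}{e_2})}{\sigma} = \coqA{\coqSome}{b}$ to extract Coq integers $z_1, z_2$ with $\coqAA{\vfEvalZ}{e_i}{\sigma} = \coqA{\coqSome}{z_i}$ for $i \in \{1,2\}$, where $b$ is the boolean outcome of comparing $z_1$ to $z_2$. I then invoke Lemma \ref{lem:vf_cl_expr_to_int_sound} on each premise $\vfclEquivInt{e_i}{e_{\CL,i}}$ (using the very same $\sigma$, $\clTenv$ from the goal), obtaining Clight evaluations of $e_{\CL,i}$ to $\coqA{\clT{Vint}}{(\coqA{\clT{Int.repr}}{z_i})}$. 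Combining these with Clight's binary-operation rule applied to $\clT{Olt}$ at type $\clTint$ yields a witness $b_\CL$ that is either $\coqA{\clT{Vint}}{\clT{Int.one}}$ or $\coqA{\clT{Vint}}{\clT{Int.zero}}$, on which $\clT{bool_val}$ returns the matching $b$. The $\leq$ case is identical modulo the operator.

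The main obstacle will be bridging $\vfEvalBool$'s direct Coq-level comparison of $z_1$ and $z_2$ with CompCert's layered semantics, which routes the binary operation through $\clT{sem_binary_operation}$, $\clT{sem_cmp}$ and ultimately $\clT{Int.cmp}$ before $\clT{bool_val}$ collapses the resulting machine integer back to a Coq boolean. The equality I need between $\clT{Int.lt}\ (\clT{Int.repr}\ z_1)\ (\clT{Int.repr}\ z_2)$ and the naive comparison of $z_1$ and $z_2$ only holds because $\coqA{\vfWellTyped}{\sigma}$ and the bounds checks inside $\vfEvalZ$ jointly guarantee $\vfMinSigned \leq z_i \leq \vfMaxSigned$, so that $\clT{Int.repr}$ does not wrap. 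I expect this to follow from standard CompCert lemmas relating $\clT{Int.signed}$ and $\clT{Int.repr}$ on in-range inputs, but making all of the intermediate CompCert-side definitions reduce cleanly is where the bulk of the bookkeeping will lie.
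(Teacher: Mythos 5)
Your proposal matches the paper's proof: the paper also proceeds by induction on $\vfclEquivBool{e}{e_\CL}$, leans on Lemma~\ref{lem:vf_cl_expr_to_int_sound} for the comparison cases, and notes that the bulk of the work lies in unpacking CompCert's evaluation and integer structures. Your additional detail on the in-range reasoning via $\vfWellTyped$ and $\clT{Int.repr}$ is consistent with that account.
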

\begin{proof}
  By induction on $\vfclEquivBool{e}{e_\CL}$. As with Lemma
  \ref{lem:vf_cl_expr_to_int_sound}, most of the proof revolves around unpacking
  CompCert's structures.
\end{proof}

\subsection{Statements}
\label{subsection:vf_cl_statement_equivalence}

As with expressions, the correspondence between \vfcx\ and Clight statements is
described by an inductively defined relation. We then describe two lemmas
describing soundness of statement execution, one for the terminating and one for
the diverging case.

\begin{definition}
  \label{def:stmt_equiv}

  The relation $\vfclEquivStmtSymbol: \vfStmt\to \coqT{Clight.statement} \to
  \coqProp$ is inductively defined as:
  \begin{mathparpagebreakable}
    \inferrule{\ }{
      \vfclEquivStmt{\vfxSkip}{\clT{Sskip}}
    }
    \and
    \inferrule{
      \vfclEquivStmt{s_1}{s_{\CL,1}} \\
      \vfclEquivStmt{s_2}{s_{\CL,2}}
    }{
      \vfclEquivStmt{(\vfxSeq{s_1}{s_2})}{(\coqAA{\clT{Ssequence}}{s_{\CL,1}}{s_{\CL,2}})}
    }
    \and
    \inferrule{
      \vfclEquivInt{e}{e_\CL} \\
      \vfclEquivStmt{s}{s_\CL}
    }{
      \vfclEquivStmt{
        (\vfxLet{x}{e}{s})
      }{
        (\coqAA{\clT{Ssequence}}{(\coqAA{\clT{Sset}}{\clId{x}}{e_\CL})}{s_\CL})
      }
    }
    \and
    \inferrule{
      \vfclEquivInt{e}{e_\CL}
    }{
      \vfclEquivStmt{
        (\vfxBinop{\vfxVar{x}}{=}{e})
      }{
        (\coqAA{\clT{Sset}}{\clId{[\vfxVar{x}]}}{e_\CL})
      }
    }
    \and
    \inferrule{
      \vfclEquivInt{e}{e_\CL}
    }{
      \vfclEquivStmt{
        (\vfxReturn{e})
      }{
        (\coqA{\clT{Sreturn}}{(\coqA{\coqSome}{e_\CL})})
      }
    }
    \and
    \inferrule{
      \vfclEquivBool{e}{e_\CL} \\
      \vfclEquivStmt{s_1}{s_{\CL,1}} \\
      \vfclEquivStmt{s_2}{s_{\CL,2}}
    }{
      \vfclEquivStmt{
        (\vfxIf{e}{s_1}{s_2})
      }{
        (\coqAAA{\clT{Sifthenelse}}{e}{s_{\CL,1}}{s_{\CL,2}})
      }
    }
    \and
    \inferrule{
      \vfclEquivBool{e}{e_\CL} \\
      \vfclEquivStmt{s}{s_{\CL}}
    }{
      \vfclEquivStmt{
        (\vfxWhile{e}{i}{s})
      }{
        (\coqAA{\clT{Swhile}}{e_\CL}{s_{\CL}})
      }
    }
  \end{mathparpagebreakable}
  Note that we use a combination of statements to model our own scoped variable
  declaration $\vfxLet{x}{e}{s}$.
\end{definition}

Next, we provide two lemmas proving the soundness of \vfcx's big step semantics
for \emph{statement execution}, $\vfCbsemExecStmtSymbol$ and
$\vfCbsemExecinfStmtSymbol$, to their counterparts in Clight's big step
semantics, $\clExecStmt$ and $\clExecinfStmt$.

For both the terminating and diverging case, we need to point out again the fact
that our bigstep semantics \cbsem\ does not deal with heap memories. This means
that the soundness lemmas for statement execution are universally quantified
over Clight memory states \emph{without further side conditions}; and in the
terminating case, the post-execution memory state is identical with the
pre-execution memory state.

In addition, \cbsem\ does not handle function calls or I/O, whereas Clight keeps
track of such observable events using a inductive type $\clT{trace}$ for
terminating executions and a coinductive type $\clT{traceinf}$ for diverging
executions. So terminating Clight executions in Lemma
\ref{lem:vf_cl_exec_stmt_sound} will always be derived with the empty Clight
$\clT{trace}$ $\clT{E0}$. Somewhat surprisingly perhaps, the lemma for soundness
of infinite statement execution (Lemma \ref{lem:vf_cl_execinf_stmt_sound})
universally quantifies \emph{over every possible Clight $\clT{traceinf}$}.

\begin{lemma}[Soundness of a terminating statement execution]
  \label{lem:vf_cl_exec_stmt_sound}

  Assume a store $\sigma$ with $\coqA{\vfWellTyped}{\sigma}$ and a temporary
  environment $\clTenv$, such that $\vfclCoveredBy{\sigma}{\clTenv}$. Let $s$
  and $s_\CL$ be statements such that $\vfclEquivStmt{s}{s_\CL}$. If
  $\vfCbsemExecStmt{\sigma}{s}{\sigma'}{o}$, then:
  \begin{enumerate}
    \item for $o = \vfOutN$, there exists a $\clTenv'$ such that Clight's
    $\clExecStmt$ relates execution of $s_\CL$, starting from $\clTenv$, with
    $\clTenv'$ and with Clight outcome $\clT{Out_normal}$.
    \item likewise, for $o = \coqA{\vfOutR}{z}$, there exists a $\clTenv'$ such
    that $\clExecStmt$ relates execution of $s_\CL$, starting from $\clTenv$,
    with $\clTenv'$ and outcome $\clT{Out_return}$, together with the proper
    Clight representation of $z$ as return value.
  \end{enumerate}
  In both cases, $\vfclCoveredBy{\sigma'}{\clTenv'}$ and the Clight execution
  is derived with the empty Clight event trace $\clT{E0}$.
\end{lemma}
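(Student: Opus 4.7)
The plan is to proceed by induction on the derivation of $\vfCbsemExecStmt{\sigma}{s}{\sigma'}{o}$. For each of the rules from Definition \ref{def:cbsem_exec_stmt}, we invert the hypothesis $\vfclEquivStmt{s}{s_\CL}$ to determine the precise shape of the Clight statement $s_\CL$, then construct the corresponding Clight derivation using the appropriate rule of $\clExecStmt$, feeding in the induction hypothesis for any sub-statements and invoking Lemma \ref{lem:vf_cl_expr_to_int_sound} or Lemma \ref{lem:vf_cl_expr_to_bool_sound} whenever an expression needs to be evaluated. Throughout, we maintain the invariant $\vfclCoveredBy{\cdot}{\cdot}$ by reasoning about how each rule extends or restricts the VeriFast store and Clight temporary environment in lockstep, and we always instantiate Clight's event trace to $\clT{E0}$, which is justified because none of our supported constructs involve function calls or I/O. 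We will also need the auxiliary fact that $\vfEvalZ$ only produces values within $\vfMinSigned \leq z \leq \vfMaxSigned$, so that $\vfWellTyped$ is preserved across every sub-derivation; this lets us recursively apply the induction hypothesis with the required hypothesis still in place.

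Most cases are mechanical. For $\vfxSkip$, $\vfxBinop{\vfxVar{x}}{=}{e}$, $\vfxReturn{e}$, and $\vfxIf{e}{s_1}{s_2}$, the corresponding Clight rule fires almost directly once the expression-evaluation lemmas are applied; in particular, the assignment case needs the second constructor of $\vfclCoveredBySymbol$ to extend coverage with the new binding. For $\vfxSeq{s_1}{s_2}$, there are two subcases distinguished by the outcome of $s_1$; both match onto $\clT{Ssequence}$ using the appropriate Clight sequence rules, with trace concatenation $\clT{E0} \ast \clT{E0} = \clT{E0}$.

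The trickiest case is $\vfxLet{x}{e}{s}$. In \cbsem\ this rule temporarily installs $x$ in the store, executes $s$, and then \emph{unsets} $x$ from the resulting store. The Clight side encodes the construct as $\coqAA{\clT{Ssequence}}{(\coqAA{\clT{Sset}}{\clId{x}}{e_\CL})}{s_\CL}$, which permanently installs $x$ as a temporary and never removes it. To reconcile this, we first use the second constructor of $\vfclCoveredBySymbol$ to obtain a coverage relation after the $\clT{Sset}$, apply the induction hypothesis to $s$ to get some $\clTenv'$ covering the post-execution store of $s$, and then \emph{re-apply} the third constructor of $\vfclCoveredBySymbol$ (which allows setting any VeriFast variable to $\coqNone$ without touching $\clTenv$) to recover the required coverage $\vfclCoveredBy{\vfStoreUpdate{x}{\coqNone}\sigma'}{\clTenv'}$. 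A secondary subtlety here is that the outcome $o$ may be either $\vfOutN$ or $\coqA{\vfOutR}{z}$ and both need to be threaded through the $\clT{Ssequence}$ combinator correctly.

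The while-loop case, which in \cbsem\ has three rules (condition-true-and-outcome-non-normal, condition-true-and-body-normal-then-loop, condition-false), unfolds naturally onto Clight's $\clT{Swhile}$ semantics, with the middle rule requiring a nested application of the induction hypothesis to both the body and the recursive while execution. I expect the main obstacle to be mundane but time-consuming: discharging the pile of unfolding obligations Clight imposes on arithmetic operations in Lemma \ref{lem:vf_cl_expr_to_int_sound}'s uses, and threading $\vfWellTyped$ through the $\vfxLet$ and while cases so that the induction hypothesis remains applicable at every recursive call. Once those bookkeeping obligations are handled, the rest of the proof is a routine pattern match between the two semantics.
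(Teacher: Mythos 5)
Your proposal is correct and follows the same route as the paper: the paper's proof likewise proceeds by induction on the derivation of $\vfCbsemExecStmt{\sigma}{s}{\sigma'}{o}$ (necessitated by the while loop), with the remaining work being exactly the case-by-case matching against Clight's rules, the expression-soundness lemmas, and the bookkeeping for $\vfclCoveredBySymbol$ and $\vfWellTyped$ that you describe. Your elaboration of the $\vfxLet{x}{e}{s}$ case via the third constructor of $\vfclCoveredBySymbol$ is precisely the intended reconciliation of scoped \vfcx\ variables with CompCert's persistent temporaries.
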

\begin{proof}
  As in earlier proofs, the presence of the while loop requires us to proceed by
  induction on the derivation of $\vfCbsemExecStmt{\sigma}{s}{\sigma'}{o}$.
\end{proof}

\begin{lemma}[Soundness of a diverging statement execution]
  \label{lem:vf_cl_execinf_stmt_sound}

  Assume a store $\sigma$ with $\coqA{\vfWellTyped}{\sigma}$ and a temporary
  environment $\clTenv$, such that $\vfclCoveredBy{\sigma}{\clTenv}$. Let $s$
  and $s_\CL$ be statements such that $\vfclEquivStmt{s}{s_\CL}$. If
  $\vfCbsemExecinfStmt{\sigma}{s}$, then there is diverging execution using
  Clight's $\clExecinfStmt$, starting from $\clTenv$.
\end{lemma}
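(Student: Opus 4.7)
The plan is to proceed by coinduction in Clight's $\clExecinfStmt$, driven by a single unfolding of the coinductive hypothesis $\vfCbsemExecinfStmt{\sigma}{s}$ together with inversion of $\vfclEquivStmt{s}{s_\CL}$. For each rule of Definition \ref{def:cbsem_execinf_stmt}, the matching constructor of $\clExecinfStmt$ is applied, its side conditions on expressions are discharged using Lemmas \ref{lem:vf_cl_expr_to_int_sound} and \ref{lem:vf_cl_expr_to_bool_sound}, and the coinductive hypothesis (or Lemma \ref{lem:vf_cl_exec_stmt_sound} for any terminating prefix) is used on the sub-executions, maintaining $\vfclCoveredBySymbol$ throughout via Definition \ref{def:st_tenv_rel}.

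The cases fall into two shapes. The pure-divergence cases (the first sequence rule, the let-binding rule, both if rules, and the first while rule) feed a diverging sub-execution into the corresponding Clight constructor via the coinductive hypothesis. For the let case we extend $\vfclCoveredBySymbol$ using its second constructor, relating $\vfStoreUpdate{x}{\coqA{\coqSome}{z}}\sigma$ to $\clT{PTree.set}\ \clId{x}\ (\clT{Vint}\ (\clT{Int.repr}\ z))\ \clTenv$. The mixed cases (the second sequence rule and the second while rule) dispatch their terminating prefix through Lemma \ref{lem:vf_cl_exec_stmt_sound}, which yields an updated $\clTenv'$, preservation of $\vfclCoveredBySymbol$, and the empty trace $\clT{E0}$; we then invoke the coinductive hypothesis on the diverging tail. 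Both mixed cases also force us to rely on a small preservation lemma for $\vfWellTyped$, stating that $\vfCbsemExecStmt{\sigma}{s}{\sigma'}{o}$ together with $\vfWellTyped{\sigma}$ implies $\vfWellTyped{\sigma'}$, since the coinductive call and Lemma \ref{lem:vf_cl_exec_stmt_sound} both demand that invariant on the post-prefix store.

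The hard part will be respecting Coq's guard condition on the cofix. The mixed cases interpose an appeal to Lemma \ref{lem:vf_cl_exec_stmt_sound} between the coinductive rule application and the coinductive call, so the proof term must be structured so that the Clight constructor is the only thing separating the cofix binding from its recursive invocation, with the finite derivation and all its invariants pre-built as separate $\coqT{let}$-bound terms before the constructor is applied. A secondary subtlety is the universal quantification over Clight $\clT{traceinf}$ noted in the lead-up to the lemma: since the supported subset emits no observable events, every terminating sub-execution produced by Lemma \ref{lem:vf_cl_exec_stmt_sound} carries $\clT{E0}$, and the quantified traceinf can simply be threaded unchanged through each application of a Clight coinductive constructor, so this aspect does not materially complicate the argument.
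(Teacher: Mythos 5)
Your proposal is correct and follows essentially the same route as the paper's proof: a coinductive argument with case analysis on the statement (equivalently, on the unfolding of $\vfCbsemExecinfStmt{\sigma}{s}$ and inversion of $\vfclEquivStmt{s}{s_\CL}$), using Lemma \ref{lem:vf_cl_exec_stmt_sound} for terminating prefixes and Lemmas \ref{lem:vf_cl_expr_to_int_sound} and \ref{lem:vf_cl_expr_to_bool_sound} for expression side conditions. The extra details you supply (preservation of $\vfWellTyped$ across terminating prefixes, and structuring the proof term to satisfy the guard condition) are reasonable elaborations of what the paper leaves implicit.
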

\begin{proof}
  The proof is coinductive and proceeds by case analysis on the statement $s$.
\end{proof}

\subsection{Lining up ASTs}
\label{subsection:statement_transformations}

We have now constructed a notion $\vfclEquivStmtSymbol$ of statement equivalence
and proven that, if $\vfclEquivStmt{s}{s_\CL}$, execution of $s$ in \cbsem\ is
sound wrt execution of $s_\CL$ in Clight. However, imagine some C code which
falls within the subset of supported expressions and statements, such as the
example from Section \ref{section:example}. If VeriFast parses this code and
produces a statement $s$ and if CompCert parses this code and produces a
statement $s_\CL$: will the relation $\vfclEquivStmt{s}{s_\CL}$ hold? For our
simple test cases, the answer is \emph{yes, except for the following two
differences}:
\begin{itemize}
  \item Our VeriFast export code currently exports a lot of
  $(\vfxSeq{s}{\vfxSkip})$ instances, which in Clight are all represented as
  \emph{just} the Clight equivalent of $s$.
  \item For the statement representing a \vfcinline{main} function body, the
  Clight compiler \texttt{clightgen} always wraps a main function with a
  \vfcinline{return 0} statement.
\end{itemize}
So before moving to program correctness, we introduce two AST transformation
functions which deal with these differences. For each transformation, we prove
that they preserve relevant execution properties within \cbsem.

\begin{definition}
  \label{def:simplify_vf_stmt}

  Fixpoint function $\vfSimplifyStmt: \vfStmt \to \vfStmt$ recursively removes
  instances of $(\vfxSeq{\_}{\vfxSkip})$ from a statement $s$:
  \begin{align*}
    \vfSimplifyStmt\ (\vfxSeq{s}{\vfxSkip}) & := s. \\
    \vfSimplifyStmt\ (\vfxSeq{s_1}{s_2}) & := \vfxSeq{(\coqA{\vfSimplifyStmt}{s}_1)}{(\coqA{\vfSimplifyStmt}{s}_2)}. \\
    \vfSimplifyStmt\ (\vfxLet{x}{e}{s}) & := \vfxLet{x}{e}{(\coqA{\vfSimplifyStmt}{s})}. \\
    \vfSimplifyStmt\ (\vfxIf{e}{s_1}{s_2}) & := \vfxIf{e}{(\coqA{\vfSimplifyStmt}{s}_1)}{(\coqA{\vfSimplifyStmt}{s}_2)}. \\
    \vfSimplifyStmt\ (\vfxWhile{e}{i}{s}) & := \vfxWhile{e}{i}{(\coqA{\vfSimplifyStmt}{s})}. \\
    \vfSimplifyStmt\ \vfxBlock{s} & := \vfxBlock{\coqA{\vfSimplifyStmt}{s}}. \\
    \coqA{\vfSimplifyStmt}{s} & := s \text{, for all other cases.}
  \end{align*}
\end{definition}

\begin{definition}
  \label{def:simplify_vf_stmt_rel}

  $\vfSimplifyStmtRel: \vfStmt \to \vfStmt \to \coqProp$ is an inductively
  defined propositional relation which reflects the computational function
  $\vfSimplifyStmt$. The only purpose of this extra construct is that it makes
  the proofs for preservation of termination and divergence (see Lemmas
  \ref{lem:simplify_vf_stmt__preserves__termination} and
  \ref{lem:simplify_vf_stmt__preserves__divergence}) easier. We omit the
  definition for this relation.
\end{definition}

\begin{lemma}
  \label{lem:simplify_vf_stmt_rel_intro}

  For each statement $s$, $s \vfSimplifyStmtRel (\coqA{\vfSimplifyStmt}{s})$.
\end{lemma}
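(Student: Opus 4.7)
The plan is to proceed by structural induction on the statement $s$, following exactly the shape of the definition of $\vfSimplifyStmt$ in Definition \ref{def:simplify_vf_stmt}. For each constructor of $\vfStmt$, the induction hypothesis will give us that the recursive calls already relate to their simplified forms via $\vfSimplifyStmtRel$, and we only need to apply the appropriate constructor of $\vfSimplifyStmtRel$ to lift this up.

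Most cases are immediate: the base cases ($\vfxSkip$, integer literals, variable references, assignment, return, and so on) reduce to the reflexive ``identity'' constructor of $\vfSimplifyStmtRel$ that handles statements left untouched by $\vfSimplifyStmt$. The inductive cases for $\vfxLet{x}{e}{\_}$, $\vfxIf{e}{\_}{\_}$, $\vfxWhile{e}{i}{\_}$ and $\vfxBlock{\_}$ each just apply the corresponding constructor of $\vfSimplifyStmtRel$ to the induction hypotheses on the substatements, since $\vfSimplifyStmt$ acts purely compositionally there.

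The main (and essentially the only) subtle case is $\vfxSeq{s_1}{s_2}$, because $\vfSimplifyStmt$ distinguishes between $s_2 = \vfxSkip$ (in which case the result is $\vfSimplifyStmt\ s_1$ alone, with the trailing skip discarded) and $s_2$ being any other statement (in which case it recurses on both sides). Accordingly, I would perform a case analysis on $s_2$: if $s_2 = \vfxSkip$, I apply the dedicated $\vfSimplifyStmtRel$ constructor that relates $\vfxSeq{s_1}{\vfxSkip}$ to $s_1'$ whenever $s_1 \vfSimplifyStmtRel s_1'$, supplying the induction hypothesis for $s_1$; otherwise I apply the generic sequence constructor with both induction hypotheses. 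This mirrors the case split in the Coq definition of $\vfSimplifyStmt$ and should match the two relevant constructors of $\vfSimplifyStmtRel$.

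The hard part, such as it is, will be purely bureaucratic: making sure the inductive relation $\vfSimplifyStmtRel$ has exactly the right constructors so that the case split on $s_2$ in the sequence case is syntactically decidable in Coq (e.g., using a \texttt{destruct} on $s_2$ with a \texttt{discriminate}-style branch to rule out the $\vfxSkip$ subcase in the generic rule). Since $\vfSimplifyStmtRel$ is designed precisely to reflect $\vfSimplifyStmt$, no genuinely interesting lemma is required; the proof should be discharged in Coq by \texttt{induction s; simpl; try (constructor; auto); destruct s2; constructor; auto} or a similarly short script.
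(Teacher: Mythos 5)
Your proposal matches the paper's proof, which is simply a straightforward structural induction on $s$; your extra care with the case split on $s_2$ in the sequence case is exactly the kind of bookkeeping that induction entails. No further comment needed.
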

\begin{proof}
  By straightforward induction on $s$.
\end{proof}

\begin{lemma}[$\vfSimplifyStmtRel$ preserves termination]
  \label{lem:simplify_vf_stmt__preserves__termination}

  Given statements $s_1$ and $s_2$ such that $s_1 \vfSimplifyStmtRel s_2$, then
  $\vfCbsemExecStmt{\sigma}{s_1}{\sigma'}{o}$ implies
  $\vfCbsemExecStmt{\sigma}{s_2}{\sigma'}{o}$.
\end{lemma}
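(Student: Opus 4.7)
The plan is to proceed by induction on the derivation of $\vfCbsemExecStmt{\sigma}{s_1}{\sigma'}{o}$ and, within each case, invert the relation $s_1 \vfSimplifyStmtRel s_2$ to discover the shape of $s_2$. Inducting on the execution rather than on the relation is what makes the while loop tractable: the iterative rule for while loops refers recursively to the whole loop, so we need an inductive hypothesis on a sub-execution of the loop itself, which only induction on the execution derivation supplies.

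For the structural constructors (skip, let, assignment, if, ordinary sequence, block, and while), inversion on $\vfSimplifyStmtRel$ forces $s_2$ to share the head constructor with $s_1$, with sub-statements related pairwise. Applying the inductive hypotheses to the sub-executions and re-using the matching \cbsem\ rule yields a derivation for $s_2$. The genuinely distinctive case is the ``collapsing'' clause of $\vfSimplifyStmtRel$ that relates $\vfxSeq{s}{\vfxSkip}$ to some $s'$ with $s \vfSimplifyStmtRel s'$. Here I case-split on which of the two sequence rules produced the execution: if the two-premise rule fired, inversion on the $\vfxSkip$ sub-derivation pins $\sigma' = \sigma''$ and forces $o = \vfOutN$, after which the IH on the first premise yields $\vfCbsemExecStmt{\sigma}{s'}{\sigma'}{\vfOutN}$; if instead the short-circuit rule fired with $o \neq \vfOutN$, a single application of the IH to the sole premise discharges the goal.

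The main obstacle I anticipate is the bookkeeping around the while loop, not any deep mathematical difficulty: the iterative rule produces two sub-executions (one of the body, one of the whole loop again), and each IH must be applied under an instance of $\vfSimplifyStmtRel$ built from the relation instance for the loop. Getting the induction set up so that both IHs are usable is, I believe, exactly why $\vfSimplifyStmtRel$ was introduced as a propositional relation rather than used through the defining function $\vfSimplifyStmt$: working relationally keeps $s_2$ abstract enough to survive inversion without forcing Coq to unfold the $\vfxSeq{\_}{\vfxSkip}$ special case prematurely, and it cleanly separates the structural sub-cases from the collapsing one.
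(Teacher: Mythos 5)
Your proposal matches the paper's proof, which likewise proceeds by induction on the derivation of $\vfCbsemExecStmt{\sigma}{s_1}{\sigma'}{o}$ (the paper gives no further detail); your case analysis, including the treatment of the collapsing $\vfxSeq{s}{\vfxSkip}$ clause via inversion on the $\vfxSkip$ sub-derivation, is a correct elaboration of that same strategy.
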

\begin{proof}
  By induction on $\vfCbsemExecStmt{\sigma}{s_1}{\sigma'}{o}$.
\end{proof}

\begin{lemma}[$\vfSimplifyStmtRel$ preserves divergence]
  \label{lem:simplify_vf_stmt__preserves__divergence}

  Given statements $s_1$ and $s_2$ such that $s_1 \vfSimplifyStmtRel s_2$, then
  $\vfCbsemExecinfStmt{\sigma}{s_1}$ implies $\vfCbsemExecinfStmt{\sigma}{s_2}$.
\end{lemma}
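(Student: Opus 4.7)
The plan is to mirror Lemma~\ref{lem:simplify_vf_stmt__preserves__termination} but with coinductive reasoning: proceed by $\mathrm{cofix}$ to introduce a coinductive hypothesis $\mathrm{IH}$ claiming exactly the lemma statement, then case-analyze on the derivation of $s_1 \vfSimplifyStmtRel s_2$. For the rules whose left-hand side is a form that cannot diverge at all (such as $\vfxSkip$, an assignment, or $\vfxReturn{e}$), inversion on $\vfCbsemExecinfStmt{\sigma}{s_1}$ produces no matching coinductive rule and the goal is discharged \emph{ex falso}.

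For the structural cases in which $s_1$ and $s_2$ share a top-level constructor (the general sequence rule with a non-$\vfxSkip$ right component, $\vfxLet{x}{e}{\_}$, $\vfxIf{e}{\_}{\_}$, $\vfxWhile{e_c}{e_i}{\_}$, and $\vfxBlock{\_}$), I would invert $\vfCbsemExecinfStmt{\sigma}{s_1}$ to identify the coinductive rule in play, apply the analogous rule of $\vfCbsemExecinfStmtSymbol$ for $s_2$, and discharge each resulting subgoal by either appealing to $\mathrm{IH}$ on a pair of related subterms (guarded, because the appeal occurs under the constructor just emitted) or by converting a terminating prefix of the execution with Lemma~\ref{lem:simplify_vf_stmt__preserves__termination}. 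The while-loop case needs care in both subrules: when the body diverges, $\mathrm{IH}$ is applied to the pair formed by the two related bodies under the ``body diverges'' constructor; when the body terminates normally and the loop then diverges, $\mathrm{IH}$ is applied to the simplified loop itself in the post-body store, under the corresponding constructor.

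The one rule whose right-hand side differs structurally from its left-hand side is the skip-stripping rule $\vfxSeq{s}{\vfxSkip} \vfSimplifyStmtRel s$ (where the simplification function returns $s$ unchanged). Pleasingly, this case requires no coinductive appeal: inversion on $\vfCbsemExecinfStmt{\sigma}{\vfxSeq{s}{\vfxSkip}}$ gives two alternatives, of which the second --- in which $s$ terminates normally and $\vfxSkip$ then diverges --- is immediately refuted by a further inversion on the divergence of $\vfxSkip$, leaving the first alternative as exactly the desired conclusion.

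The main obstacle is not mathematical but rather Coq's productivity check: every invocation of $\mathrm{IH}$ must appear syntactically beneath a constructor of $\vfCbsemExecinfStmtSymbol$. The plan above respects this discipline by always applying the relevant inf-exec constructor \emph{before} appealing to $\mathrm{IH}$, but the while case in particular must be written carefully, since the tempting order ``appeal to $\mathrm{IH}$, then finish'' would produce an unguarded corecursive call that Coq would refuse.
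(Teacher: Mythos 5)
Your overall strategy---begin with \texttt{cofix} and then merely \emph{case-analyze} the derivation of $s_1 \vfSimplifyStmtRel s_2$---is genuinely different from the paper's, which begins by \emph{induction} on that (finite) derivation and reserves a single corecursive call for the while case. The difference matters exactly at the point you declare harmless. Your treatment of the skip-stripping case assumes the relation's rule is the premise-free $\vfxSeq{s}{\vfxSkip} \vfSimplifyStmtRel s$, so that inversion on $\vfCbsemExecinfStmt{\sigma}{\vfxSeq{s}{\vfxSkip}}$ hands you the goal verbatim and no corecursive appeal is needed. But the paper omits the definition of $\vfSimplifyStmtRel$, and its proof note reports that the coinduction-first attempt fails precisely for $s_1 = \vfxSeq{s_1'}{\vfxSkip}$ because of guardedness. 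That is what happens if the stripping rule carries a premise relating the stripped statement to the right-hand side (i.e.\ $s_1' \vfSimplifyStmtRel s_2$ yields $\vfxSeq{s_1'}{\vfxSkip} \vfSimplifyStmtRel s_2$, as one would expect if simplification is applied recursively under the stripped skip): inversion then gives you $\vfCbsemExecinfStmt{\sigma}{s_1'}$ while the goal is $\vfCbsemExecinfStmt{\sigma}{s_2}$, and the only bridge available to you is the coinductive hypothesis applied under \emph{no} constructor---exactly the unguarded call Coq rejects. Plain case analysis gives you no inductive hypothesis for that nested relation, so your plan has no way out of this configuration.

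The paper's remedy is to make the outer recursion structural in the relation: inducting on $s_1 \vfSimplifyStmtRel s_2$ turns the problematic appeal into an application of an unrestricted inductive IH (or makes the case trivial if the rule is premise-free), and the sole remaining corecursive call---for the second while rule, where the body terminates normally and the loop diverges from the new store---is guarded under the constructor just as you describe. Your handling of the structural cases, the use of Lemma~\ref{lem:simplify_vf_stmt__preserves__termination} for terminating prefixes, and your guardedness discipline in the while case all agree with the paper; the gap is that your proof stands or falls with the exact (unstated) form of the skip rule, whereas the induction-on-the-relation structure is robust either way and is what the authors in fact needed after their coinduction-first attempt failed.
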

\begin{proof}
  Perhaps surprisingly, this proof was slightly more difficult than expected.
  Starting the proof with coinduction fails for the case in which $s_1 =
  (\vfxSeq{s_1'}{\vfxSkip})$, due to the difficulty in fulfilling the
  guardedness condition. So we start by induction on $s_1 \vfSimplifyStmtRel
  s_2$, which is a finite derivation, allowing the application of the proper
  coinductive constructors to produce $\vfCbsemExecinfStmt{\sigma}{s_2}$. We
  then only need to make a co-recursive call within the case for the while loop,
  which can be properly guarded.
\end{proof}

\begin{definition}
  \label{def:programify_vf_stmt}

  Fixpoint function $\vfProgramifyStmt: \vfStmt \to \vfStmt$ appends a return
  statement to some $s$:
  \begin{align*}
    \vfProgramifyStmt s & := \vfxSeq{s}{\vfxReturn{0}}.
  \end{align*}
  This transformation will be necessary because CompCert does it automatically
  to each main function.
\end{definition}

\begin{lemma}[$\vfProgramifyStmt$ preserves return outcomes]
  \label{lem:programify_vf_stmt__preserves__return}

  Given $\vfCbsemExecStmt{\sigma}{s}{\sigma'}{\coqA{\vfOutR}{z}}$ then
  $\vfCbsemExecStmt{\sigma}{(\vfProgramifyStmt s)}{\sigma'}{\coqA{\vfOutR}{z}}$;
  that is, $\vfProgramifyStmt$ does not change the outcome of an execution which
  returns a value.
\end{lemma}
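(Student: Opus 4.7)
The plan is to unfold the definition of $\vfProgramifyStmt$ and then directly apply the appropriate sequence rule from Definition \ref{def:cbsem_exec_stmt}. By definition, $\vfProgramifyStmt\ s = \vfxSeq{s}{\vfxReturn{0}}$, so the goal becomes showing $\vfCbsemExecStmt{\sigma}{(\vfxSeq{s}{\vfxReturn{0}})}{\sigma'}{\coqA{\vfOutR}{z}}$.

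Recall that the big step semantics for sequences includes a rule that handles non-normal outcomes from the first sub-statement: if $\vfCbsemExecStmt{\sigma}{s_1}{\sigma'}{o}$ with $o \neq \vfOutN$, then $\vfCbsemExecStmt{\sigma}{(\vfxSeq{s_1}{s_2})}{\sigma'}{o}$ for any $s_2$. This is exactly the rule we need: take $s_1 := s$, $s_2 := \vfxReturn{0}$, and $o := \coqA{\vfOutR}{z}$. The first premise is supplied by the hypothesis, and the side condition $\coqA{\vfOutR}{z} \neq \vfOutN$ is immediate by the disjointness of the constructors of $\vfOut$.

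There is no real obstacle here; no induction is needed, and the post-state $\sigma'$ and return value $z$ are propagated unchanged. The proof is essentially a one-liner: unfold $\vfProgramifyStmt$, apply the second sequence inference rule, and discharge the trivial side condition. The reason this lemma is worth stating is not its difficulty, but that it is used downstream to connect the VeriFast-side AST of \vfcinline{main} with CompCert's automatic wrapping of main function bodies with a trailing \vfcinline{return 0}; any returning execution of the user-written body shadows the appended return and thus survives the transformation with its value intact.
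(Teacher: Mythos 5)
Your proof is correct, and it is in fact slightly more direct than the one in the paper. The paper proves this lemma by case analysis on the derivation of $\vfCbsemExecStmt{\sigma}{s}{\sigma'}{\coqA{\vfOutR}{z}}$, whereas you observe that no inspection of the hypothesis is needed at all: once $\vfProgramifyStmt\ s$ is unfolded to $\vfxSeq{s}{\vfxReturn{0}}$, the second sequence rule of Definition~\ref{def:cbsem_exec_stmt} (the one propagating a non-normal outcome of $s_1$ past $s_2$) applies immediately, with the side condition $\coqA{\vfOutR}{z} \neq \vfOutN$ discharged by constructor disjointness. Both arguments are one-liners and prove the same statement for the same reason --- the appended $\vfxReturn{0}$ is shadowed by any returning execution of $s$ --- but your version makes explicit that the hypothesis is used only as the premise of that rule, with no case split on how it was derived; the paper's case analysis buys nothing extra here beyond matching the mechanical style (inversion on the derivation) used in its neighbouring lemmas.
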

\begin{proof}
  By case analysis on the derivation of
  $\vfCbsemExecStmt{\sigma}{s}{\sigma'}{\coqA{\vfOutR}{z}}$.
\end{proof}

\begin{lemma}[$\vfProgramifyStmt$ preserves divergence]
  \label{lem:programify_vf_stmt__preserves__divergence}

  Given that $\vfCbsemExecinfStmt{\sigma}{s}$ holds, then
  $\vfCbsemExecinfStmt{\sigma}{(\vfProgramifyStmt{s})}$.
\end{lemma}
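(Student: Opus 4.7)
The plan is to observe that $\vfProgramifyStmt\ s$ is by definition just $\vfxSeq{s}{\vfxReturn{0}}$, so the goal reduces to showing $\vfCbsemExecinfStmt{\sigma}{(\vfxSeq{s}{\vfxReturn{0}})}$ from the hypothesis $\vfCbsemExecinfStmt{\sigma}{s}$. Inspecting Definition \ref{def:cbsem_execinf_stmt}, the first coinductive rule for the sequence operator states that divergence of the head $s_1$ is already enough to conclude divergence of $\vfxSeq{s_1}{s_2}$, regardless of what $s_2$ is. Instantiating $s_1$ with $s$ and $s_2$ with $\vfxReturn{0}$, the single premise of that rule is precisely the hypothesis.

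So the proof is a one-step application of the appropriate coinductive constructor, with no case analysis on $s$ and no induction or nested coinduction required. In contrast with Lemma~\ref{lem:simplify_vf_stmt__preserves__divergence}, there is no guardedness issue to worry about here: we are producing exactly one constructor on the way to re-using the hypothesis, and the hypothesis is reused under a proper guard, so Coq should accept the derivation directly.

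The only conceptual point worth flagging, though it is not a real obstacle, is choosing the right of the two sequence rules. The second rule would require us to produce a finite terminating derivation $\vfCbsemExecStmt{\sigma}{s}{\sigma'}{\vfOutN}$ for the head, which we have no reason to believe exists (indeed, we expect $s$ to diverge). The first rule sidesteps this entirely by asking only for divergence of the head. Hence, barring a small unfolding of $\vfProgramifyStmt$ at the start, the entire proof is a single rule application.
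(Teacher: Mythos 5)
Your proposal is correct and matches the paper's proof exactly: the paper also discharges this lemma by a single application of the $\vfCbsemExecinfStmtSymbol$ constructor for divergence of the first statement of a sequence, after unfolding $\vfProgramifyStmt\ s$ to $\vfxSeq{s}{\vfxReturn{0}}$. Your additional remarks on rule choice and guardedness are accurate but not needed beyond that one-step application.
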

\begin{proof}
  By application of the $\vfCbsemExecinfStmtSymbol$ constructor for divergence
  of the first statement of a sequence.
\end{proof}

\subsection{Correctness of programs}
\label{subsection:vf_cl_programs}

We want to build confidence in our \cbsem\ semantics by proving soundness with
respect to Clight. To this end, Lemmas \ref{lem:vf_cl_exec_stmt_sound} and
\ref{lem:vf_cl_execinf_stmt_sound} already provide good results. But ultimately,
our primary goal remains to work towards \emph{full end-to-end soundness on the
level of programs}, giving us confidence that a program which is proven correct
by symbolic execution in VeriFast, when afterwards compiled with CompCert and
executed, will not crash.

The subset of C that we chose for this proof-of-concept does not feature
\emph{function calls}, meaning that any statements about program correctness
must consider only the \vfcinline{main} function in a program. With this
shortcut in mind, we still need to provide precise notions for the following
things:
\begin{itemize}
  \item program correspondence between a \cbsem\ program and a Clight program
  (see Definition \ref{def:prog_equiv});
  \item program correctness in \cbsem\footnote{The notion of \cbsem\ program
  correctness introduced here will be tailored for usage with Clight, which is
  why we define it here and not in Section \ref{section:cbsem}.} (see Definition
  \ref{def:cbsem_exec_prog_correct}), together with a connection to \cbsem's
  function correctness $\vfCbsemFunc$, which is required to make the connection
  to symbolic execution;
  \item program correctness for Clight (Definition
  \ref{def:compcert_exec_prog_correct}).
\end{itemize}
With all these definitions in place we will then state and prove the second main
soundness theorem of this work (Theorem \ref{thm:vf_cl_sound}).

\begin{definition}
  \label{def:prog_equiv}

  The relation $\vfclEquivProgSymbol: \vfStmt \to \clT{Clight.program} \to
  \coqProp$ is inductively defined with a single constructor, relating a \vfcx\
  statement (representing the body of the \vfcinline{main} function as parsed
  with VeriFast) with an entire Clight program $p_\CL$ with main function
  $\clT{cl_main}$:
  \begin{mathparpagebreakable}
    \inferrule{
      (\dots) \\
      \coqA{\clT{Clight.fn_body}}{\clT{cl_main}} = s_\CL \\
      \vfclEquivStmt{s}{s_\CL}
    }{
      \vfclEquivProg{s}{p_\CL}
    }
  \end{mathparpagebreakable}
  Although not shown explicitly in the above inference rule, this constructor
  also requires evidence about the initial execution state and the memory layout
  of the CompCert program. As our C subset does not support programs using heap
  memory and main functions taking arguments, this can be trivially provided by
  the VeriFast-exported Coq script, after it has imported the Coq script
  generated by CompCert.
\end{definition}

\begin{definition}
  \label{def:cbsem_exec_prog_correct}

  A \vfcx\ program consists of a \vfcinline{main} function with body $s$ which
  takes no arguments. This program is \emph{correct}, written
  $\coqA{\vfCbsemProg}{s}$, if either:
  \begin{itemize}
    \item there exist $\sigma'$ and $z$ such that
    $\vfCbsemExecStmt{\emptyset}{s}{\sigma'}{\coqA{\vfOutR}{z}}$;
    \item or $\vfCbsemExecinfStmt{\sigma}{s}$.
  \end{itemize}
\end{definition}

Predicate $\vfCbsemProg$ will be proven sound with regards to Clight. But in
order to apply the full end-to-end chain of soundness proofs, we need to link it
to symbolic execution. In order to use Theorem \ref{thm:symexec_func_sound},
which links the correctness of symbolic execution $\vfSymExecFunc$ to semantic
correctness $\vfCbsemFunc$, we need another small lemma which links
$\vfCbsemFunc$ to $\vfCbsemProg$.

\begin{lemma}
  \label{lem:cbsem_func_sound}

  Given $\vfCbsemFunc\ \coqT{[}\ \coqT{]}\ \vfxTrue\ s\ q$, that is the correct
  execution in \cbsem\ of a function without arguments and with body $s$,
  precondition $\vfxTrue$ and postcondition $q$, then we may conclude
  $\coqA{\vfCbsemProg}{(\coqA{\vfProgramifyStmt}{(\coqA{\vfSimplifyStmt}{s})})}$.
\end{lemma}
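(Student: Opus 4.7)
The plan is to unfold the hypothesis $\vfCbsemFunc\ \coqT{[}\ \coqT{]}\ \vfxTrue\ s\ q$ by instantiating the universally quantified store $\sigma$ with $\vfEmptyStore$. All three side conditions in Definition \ref{def:cbsem_exec_func_correct} collapse to triviality because the argument list is empty: $\vfWellTyped\ \vfEmptyStore$ holds vacuously since $\vfEmptyStore\ x = \coqNone$ for every $x$; the equivalence $\vfEmptyStore \equiv_{\coqT{[}\ \coqT{]}} \vfEmptyStore$ is reflexive; and $\coqT{[}\ \coqT{]} \subseteq \coqA{\vfDom}{\vfEmptyStore}$ holds vacuously.

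Next I reduce the resulting instance. Since $e_p = \vfxTrue$, $\coqAAA{\vfTranslateProp}{\vfxTrue}{C}{\sigma}$ unfolds (analogously to the $\vfxTrue$ case of $\vfEvalPropCps$) to $\coqA{C}{\coqTrue}$, so the implication $\coqTrue \to \ldots$ can be discharged immediately. What remains is the disjunction of (a) an existential witnessing a terminating execution $\vfCbsemExecStmt{\vfEmptyStore}{s}{\sigma'}{\coqA{\vfOutR}{z}}$ together with a translated postcondition $Q$ which I simply discard, and (b) a diverging execution $\vfCbsemExecinfStmt{\vfEmptyStore}{s}$. In either case the witness is stated in terms of the original body $s$, whereas the goal $\coqA{\vfCbsemProg}{(\coqA{\vfProgramifyStmt}{(\coqA{\vfSimplifyStmt}{s})})}$ refers to the twice-transformed body.

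I bridge this gap by invoking the AST-transformation preservation lemmas from Subsection \ref{subsection:statement_transformations}. In the terminating case (a), I chain Lemma \ref{lem:simplify_vf_stmt_rel_intro} (to obtain $s \vfSimplifyStmtRel (\coqA{\vfSimplifyStmt}{s})$), then Lemma \ref{lem:simplify_vf_stmt__preserves__termination} (to propagate the terminating derivation through $\vfSimplifyStmt$), and finally Lemma \ref{lem:programify_vf_stmt__preserves__return} (to wrap the resulting return in $\vfProgramifyStmt$), yielding the terminating branch of $\vfCbsemProg$. In the diverging case (b), I symmetrically chain Lemmas \ref{lem:simplify_vf_stmt_rel_intro}, \ref{lem:simplify_vf_stmt__preserves__divergence} and \ref{lem:programify_vf_stmt__preserves__divergence} to reach the diverging branch.

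The proof is essentially plumbing and contains no real mathematical surprises; all of the substantive work has already been carried by the preservation lemmas. The only step requiring any care is the initial unfolding of $\vfCbsemFunc$: discharging the three side conditions on $\sigma$, peeling off the $\vfTranslateProp$ layer for $\vfxTrue$, and noticing that since $\vfCbsemProg$ makes no mention of the postcondition, the translated $Q$ may be dropped from the conjunction inside the existential. This last observation is exactly why the lemma imposes no constraint whatsoever on $q$.
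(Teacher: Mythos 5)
Your proposal is correct and follows essentially the same route as the paper: instantiate the $\vfCbsemFunc$ hypothesis (with the empty store and trivial precondition), case-split on the resulting termination/divergence disjunction while discarding the postcondition, and then transfer each case to the transformed body via Lemma~\ref{lem:simplify_vf_stmt_rel_intro} together with the $\vfSimplifyStmt$ and $\vfProgramifyStmt$ preservation lemmas. If anything, your per-case assignment of those preservation lemmas is stated more cleanly than in the paper's own proof sketch, which groups them somewhat loosely.
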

\begin{proof}
  Hypothesis $\vfCbsemFunc\ \coqT{[}\ \coqT{]}\ \vfxTrue\ s\ q$ is a disjunction
  which allows us to consider two cases:
  \begin{enumerate}
    \item For the case of termination of $s$ with a return outcome, we proceed
    by Lemmas \ref{lem:simplify_vf_stmt__preserves__termination},
    \ref{lem:simplify_vf_stmt__preserves__divergence} and
    \ref{lem:simplify_vf_stmt_rel_intro}.
    \item For the case of divergence of $s$, we proceed by Lemmas
    \ref{lem:programify_vf_stmt__preserves__return} and
    \ref{lem:programify_vf_stmt__preserves__divergence}.
  \end{enumerate}
\end{proof}

\begin{definition}
  \label{def:compcert_exec_prog_correct}

  Predicate $\clExecProg: \clT{Clight.program} \to \coqProp$ expresses
  correctness of a program $p$ with regards to Clight's own big step semantics
  ($\clT{ClightBigstep}$ \cite{blazy-clight-jar-2009}) if:
  \begin{itemize}
    \item either there exists a return value $v$ such that, for the empty trace
    $\clT{E0}$,
    $\coqAAA{\clT{bigstep_program_terminates}}{p}{\clT{E0}}{(\coqA{\clT{Int.repr}}{v})}$;
    \item or $\coqAA{\clT{bigstep_program_diverges}}{p}{\clT{trc}_\infty}$ holds
    for all infinite event traces $\clT{trc}_\infty$.
  \end{itemize}
  $\clT{bigstep_program_terminates}$ and $\clT{bigstep_program_diverges}$
  describe CompCert's notion of program correctness for terminating and
  diverging programs.

  The universal quantifier in the diverging case results from the fact that
  $\clT{traceinf}$ is defined coinductively akin to a stream, so there is no
  empty $\clT{traceinf}$ constructor. Since there is no concept in \cbsem\ which
  is equivalent to Clight execution traces, any infinite event trace is
  accepted.
\end{definition}

\begin{theorem}[Soundness of program execution]
  \label{thm:vf_cl_sound}

  Consider a main function with body $s$. Let $s'$ be shorthand notation for
  $\coqA{\vfProgramifyStmt}{(\coqA{\vfSimplifyStmt}{s})}$. If $s'
  \vfclEquivProgSymbol p$ for some Clight program $p$, then program correctness
  $\coqA{\vfCbsemProg}{s'}$ in VeriFast implies program correctness
  $\coqA{\clExecProg}{p}$ in Clight.
\end{theorem}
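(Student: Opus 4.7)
The plan is to case-split on the two disjuncts of $\vfCbsemProg{s'}$ and, in each case, invoke the corresponding statement-level soundness lemma before repackaging the result into Clight's program-level judgement.

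First I would destructure $\vfclEquivProg{s'}{p}$ to obtain the Clight main function body $s_\CL$ with $\vfclEquivStmt{s'}{s_\CL}$, together with the initial-state data (global environment, memory, initial temporary environment $\clTenv_0$) that Definition \ref{def:prog_equiv} bundles into its ``$(\dots)$'' premise. On the VeriFast side both $\vfWellTyped{\emptyset}$ and $\vfclCoveredBy{\emptyset}{\clTenv_0}$ hold trivially: the former vacuously, the latter by the first constructor of Definition \ref{def:st_tenv_rel}.

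In the terminating case we have $\vfCbsemExecStmt{\emptyset}{s'}{\sigma'}{\coqA{\vfOutR}{z}}$ for some $\sigma'$ and $z$. Applying Lemma \ref{lem:vf_cl_exec_stmt_sound} yields some $\clTenv'$ and a Clight execution of $s_\CL$ from $\clTenv_0$ to $\clTenv'$ with outcome $\clT{Out_return}$, value $\clT{Vint}(\clT{Int.repr}\ z)$, and empty trace $\clT{E0}$. Combined with the program setup provided by $\vfclEquivProg$, this assembles into $\clT{bigstep_program_terminates}\ p\ \clT{E0}\ (\clT{Int.repr}\ z)$, discharging the first disjunct of $\clExecProg{p}$. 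In the diverging case we have $\vfCbsemExecinfStmt{\emptyset}{s'}$. I would introduce an arbitrary infinite trace $\clT{trc}_\infty$ (as required by Definition \ref{def:compcert_exec_prog_correct}) and invoke Lemma \ref{lem:vf_cl_execinf_stmt_sound}, whose universal quantification over Clight $\clT{traceinf}$ matches exactly this outer universal. This yields a Clight diverging execution of $s_\CL$ starting from $\clTenv_0$, which, wrapped with the program setup, gives $\clT{bigstep_program_diverges}\ p\ \clT{trc}_\infty$.

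The main obstacle is the mechanical but delicate packaging step in both cases: CompCert's $\clT{bigstep_program_terminates}$ and $\clT{bigstep_program_diverges}$ bundle globals initialisation, initial-state construction and function-entry conventions around the bare statement execution. All the genuinely semantic work has already been done by Lemmas \ref{lem:vf_cl_exec_stmt_sound} and \ref{lem:vf_cl_execinf_stmt_sound}; what remains is bookkeeping that relies precisely on the extra evidence hidden in the ``$(\dots)$'' premise of Definition \ref{def:prog_equiv}. Provided that bundle contains what it is designed to contain, the theorem follows by routine rule application.
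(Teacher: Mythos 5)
Your proposal is correct and follows essentially the same route as the paper's (much terser) proof: case-split on the two disjuncts of $\coqA{\vfCbsemProg}{s'}$, apply Lemma~\ref{lem:vf_cl_exec_stmt_sound} in the returning case and Lemma~\ref{lem:vf_cl_execinf_stmt_sound} in the diverging case, and assemble $\clT{bigstep_program_terminates}$ resp.\ $\clT{bigstep_program_diverges}$ from the initial-state evidence bundled in Definition~\ref{def:prog_equiv}. Your write-up merely makes explicit the bookkeeping the paper leaves implicit (vacuous well-boundedness of the empty store, the base case of Definition~\ref{def:st_tenv_rel}, and the universal quantification over infinite traces).
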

\begin{proof}
  Hypothesis $\coqA{\vfCbsemProg}{s'}$ provides us with two cases, both of which
   provide all the evidence needed to conclude
   $\coqA{\clT{bigstep_program_terminates}}{p}$ for the returning case or
   $\coqA{\clT{bigstep_program_diverges}}{p}$ for the diverging case.
\end{proof}

% % % % % % % % % % % % % % % % % % % % % % % % % % % % % % % % % % % % % % % %

\section{Related work}
\label{section:relatedwork}

Due to the limited scale of this TR, it is early to provide a deep comparison of
advantages and disadvantages with respect to approaches taken by other projects.
We do however want to list the following related projects.

\subsection*{Certificate-generating program verification tools}

Boogie verifies a program, written in its intermediate language, by transforming
it through a number of steps into a Verification Condition (VC). This VC can
then be validated by an SMT solver. Similar to our approach, a recent extension
to the Boogie verifier \cite{parthasarathy-boogie-cav-2021} allows the automatic
generation of machine checkable certificates for a verification run, thereby
avoiding direct verification of Boogie itself. These certificates import the
input- and output programs for three of the most complicated steps in Boogie's
verification pipeline into Isabelle. For each step, it is proven that
correctness of the output program (or VC) implies correctness of the input
program. The certificates however do not currently prove the VC itself and
therefore do not constitute a full machine checkable correctness proof of the
verified program.

\subsection*{Certified verification of C programs}

Verified Software Toolchain (VST) \cite{appel-sf5-2021} allows semi-automatic
production of a correctness proof for C programs, stated in a program logic
called Verifiable C. It is developed entirely within Coq and tightly integrated
with CompCert, making it possible to conclude that the correctness properties of
a verified program are preserved during compilation with CompCert. Verifiable C
contains a very complete set of C language features and allows reasoning about
memory states through separation logic. The actual proving in VST, while
assisted by a library of powerful tactics (VST-Floyd), still requires manual
guidance. Our own aim however is for our certificates to be standalone Coq
scripts, meaning that we want to avoid interactive proof guidance at the level
of Coq. We plan to achieve this by translating into Coq the pre-existing
VeriFast constructs such as loop invariants, lemmas and heap predicates; and by
replicating the reasoning performed by VeriFast's SMT solver.

RefinedC \cite{sammler-refinedc-pldi-2021} takes an approach to foundational
verification not unlike that taken by VST, in the sense that it allows stating
the correctness of C programs featuring heap memory and concurrency and that
proving these properties takes place entirely within Coq. Its language semantics
also describe a language with a low-level memory model similar to (but not
identical with) that of CompCert \cite{leroy-memory-tr-2012}. However, RefinedC
differs significantly from VST in that it promises fully automatic proving,
meaning that somebody who is certifying a C program with RefinedC should not
have to delve into Coq. To this end, RefinedC annotations describing pre- and
postconditions and invariants leverage a sophisticated typing system, which is
translated into a subset of the Iris framework for separation logic
\cite{jung-iris-jfp-2018} that facilitates efficient, syntax-directed proof
search. Automation therefore may require an expert extending the typing system
for new use cases. Likewise, by relying on the translation of existing VeriFast
constructs and by inspection of the SMT solver, we aim to offer a similar level
of automation. We hope that, to some degree, our approach will make it possible
to handle new use cases \emph{merely} through these existing VeriFast
constructs, rather than by having an expert make changes to the Coq library or
to the exporting code.

% % % % % % % % % % % % % % % % % % % % % % % % % % % % % % % % % % % % % % % %

\section{Future work}
\label{section:futurework}

The most obvious need for expansion of course exists in the subset of C that we
support. Our ultimate goal is to support the full set of C supported by
VeriFast. This will require symbolic execution in Coq and \cbsem\ to support
function calls, pointer and \vfcinline{struct} types, heap memory and separation
logic. We will also need to translate supporting VeriFast constructs such as
heap predicates, lemmas and fixpoints into Coq and make these translated
versions useable for our automatically generated SEP proofs. Keeping these
proofs automated for the general case will require us to move beyond the
simplistic syntax-driven tactics presented in Section
\ref{subsection:proving_sep}.

In terms of setting up a \emph{soundness chain} with respect to a third party
operational semantics for C, we aim to prove \cbsem\ sound with respect to CH2O
\cite{krebbers-ch2o-phd-2015}, which provides a very strict interpretation of
the C11 standard \cite{iso-c-standard-2012}. If VeriFast can prove a program
correct in CH2O, this should mean that the program will be correct with respect
to any common compiler which respects the C standard -- which, while in no way a
\emph{formal} statement, is a very useful result.

We can then still further extend the soundness chain by proving CH2O operational
semantics sound with respect to one of CompCert's operational semantics. This
would allow us to keep the benefit of being able to make a formal statement
about program correctness down to the level of assembly code, a benefit that we
currently possess by virtue of Theorem \ref{thm:vf_cl_sound}, proving the
soundness of \cbsem\ with respect to CompCert.

% % % % % % % % % % % % % % % % % % % % % % % % % % % % % % % % % % % % % % % %

\bibliographystyle{acm}
\bibliography{index}

% % % % % % % % % % % % % % % % % % % % % % % % % % % % % % % % % % % % % % % %

\appendix

% % % % % % % % % % % % % % % % % % % % % % % % % % % % % % % % % % % % % % % %

\section{Building and using the extension}
\label{appendix:building_and_using}

The extension described in this TR is not yet included in the VeriFast trunk.
But as mentioned in the introduction, it can be downloaded here:

\begin{center}
  \vfurl
\end{center}

In this appendix we provide instructions on how to build this extension to
VeriFast and how to use it for verification of simple programs which fall within
the limited subset supported by \vfcx.

\subsection{Building the extension}

First we need to build VeriFast itself. First, download the code from the above
address and check out the branch which contains the extension:
\begin{verbatim}
$ git clone verifast-coq-export-tr-2021.bundle verifast
$ cd verifast
$ git checkout coq-export-tr-2021
\end{verbatim}
Next, we follow the standard build instructions for VeriFast. For macOS, we can
build VeriFast like this:
\begin{verbatim}
$ ./setup-build.sh
$ cd src
$ export PATH=/usr/local/vfdeps-509f16f/bin:$PATH
$ export DYLD_LIBRARY_PATH=/usr/local/vfdeps-509f16f/lib:
    $DYLD_LIBRARY_PATH
$ make
\end{verbatim}
For other platforms such as Windows and Linux, consult the VeriFast
documentation itself (these can be found through the main \coderef{README.md}
included with the VeriFast project).

Next we need to install CompCert and compile our own Coq code. We have developed
this project using Coq 8.13.2 and CompCert 3.9. Our current suggestion is to do
all of this using OCaml's package manager Opam. Opam can be easily installed
using your system's package manager, such as Homebrew on macOS. Opam allows us
to build a local switch (which is a local, self-contained installation of OCaml
and Coq).

Once you have installed Opam, execute the following commands from the main
VeriFast directory to install the Coq and CompCert requirements:
\begin{verbatim}
$ cd ..
$ opam switch create . 4.12.0
$ eval $(opam env)
$ opam install coq.8.13.2
$ opam repo add coq-released https://coq.inria.fr/opam/released
$ opam install coq-compcert.3.9
\end{verbatim}
Now we can compile the VeriFast Coq code itself:
\begin{verbatim}
$ coq_makefile -f "_CoqProject" -o "Makefile.coq"
$ make -f Makefile.coq
\end{verbatim}

\subsection{Using the extension}

A number of test files are included in folder \coderef{tests/coq}. These have not
yet been added to the main test suite, because our setup will change in the
future\footnote{Specifically, we may decide to no longer automatically load the
CompCert-generated Coq script from within our own artefact. More fundamentally,
we will likely opt to replace CompCert with another third-party C semantics.}.
To run one of these included examples (such as \coderef{test_countdown.c}, which
is the example from Section \ref{section:example}), use the
\texttt{-emit\_coq\_proof} option:
\begin{verbatim}
$ bin/verifast -shared -emit_coq_proof -bindir bin
    tests/coq/test_countdown.c
\end{verbatim}
Upon successful verification, this will generate a Coq artefact
\coderef{test_vf.v} in your present working directory. This artefact will need
to be successfully typechecked by Coq to prove the correctness of VeriFast's
verification. But because this artefact imports another Coq file, generated by
CompCert's \texttt{clightgen}, we first need to generate this second script and
compile it using Coq:
\begin{verbatim}
$ clightgen tests/coq/test_countdown.c -o test_cc.v
$ coqc test_cc.v
\end{verbatim}
With the CompCert export compiled, we can now finally typecheck the artefact
exported by VeriFast:
\begin{verbatim}
$ coqc test_vf.v -Q src/coq verifast
\end{verbatim}
Note that some examples from \coderef{tests/coq} require the
\texttt{-allow\_dead\_code} option to be used with VeriFast, since they
explicitly test how our export code deals with unreachable code.

Finally, a simple shell script \coderef{test_coq.sh} has been included to
perform the above 4 steps automatically:
\begin{verbatim}
$ ./test_coq.sh tests/coq/test_countdown.c
\end{verbatim}
This script leaves the resulting scripts around for further inspection.

% % % % % % % % % % % % % % % % % % % % % % % % % % % % % % % % % % % % % % % %

\section{Overview of the extension code}
\label{appendix:overview}

In the course of our work, we developed a Coq library which must be loaded by
each proof artefact. In this appendix, we briefly discuss the Coq modules which
comprise this library. We also provide lookup tables to locate the
implementations of the definitions, lemmas and theorems in this report. The
files for this library can be found under \coderef{src/coq/*.v}. Finally, we
also briefly describe how we instrumented the VeriFast OCaml code to export
the certificates themselves.

\subsection*{\coderef{base.v}}

\coderef{cx.v} contains basic Coq notation settings shared by our entire
library.

\subsection*{\coderef{cx.v}}

\coderef{cx.v} contains the main type definitions for embedding \vfcx\ ASTs in
Coq. Its contents are mainly introduced in Section~\ref{section:vfcx}.

\begin{center}
  \begin{tabularx}{\textwidth}{p{54mm}p{34mm}X}
    \textbf{Name in Coq module} & \textbf{Concise} & \textbf{Introduced} \\
    \hline

    \coqinline{expr} & $e \in \vfExpr$ & Definition~\ref{def:expr} \\
    \coqinline{stmt} & $s \in \vfStmt$ & Definition~\ref{def:stmt} \\
    \coqinline{stmt_to_free_targets} & $\coqA{\vfFreeTargets}{s}$ & Definition~\ref{def:stmt_to_free_possible_targets}
  \end{tabularx}
\end{center}

\subsection*{\coderef{shared.v}}

Module \coderef{shared.v} currently collects a variety of basic data structures
and lemmas related to stores, store equivalence, variable binding and generating
propositions that quantify over variables.

\begin{center}
  \begin{tabularx}{\textwidth}{p{54mm}p{34mm}X}
    \textbf{Name in Coq module} & \textbf{Concise} & \textbf{Introduced} \\
    \hline

    \coqinline{store} & $\sigma \in \vfStore$ & Definition~\ref{def:store} \\
    \coqinline{empty_store} & $\vfEmptyStore$ & Definition~\ref{def:store} \\
    \coqinline{store_update} & $\vfStoreUpdate{x}{v} \sigma$ & Definition~\ref{def:store} \\
    \coqinline{cont} & $C \in \vfCont$ & Definition~\ref{def:cont} \\
    \coqinline{min_signed} & $\vfMinSigned$ & Definition~\ref{def:for_Z} \\
    \coqinline{max_signed} & $\vfMaxSigned$ & Definition~\ref{def:for_Z} \\
    \coqinline{is_int} & $\coqA{\coqT{is_int}}{z}$ & Definition~\ref{def:for_Z} \\
    \coqinline{for_Z} & $\coqAAA{\vfForZ}{x}{C}{\sigma}$ & Definition~\ref{def:for_Z} \\
    \coqinline{for_Zs} & $\coqAAA{\vfForZs}{\overline{x}}{C}{\sigma}$ & Definition~\ref{def:for_Z} \\
    \coqinline{havoc_Zs} & $\coqAAA{\vfHavocZs}{\overline{x}}{C}{\sigma}$ & Definition~\ref{def:for_Z} \\
    \coqinline{well_bound} & $\coqA{\vfWellTyped}{\sigma}$ & Definition~\ref{def:well_typed} \\
    \coqinline{store_eq_mod_xs} & $\sigma \equiv_{\overline{x}} \sigma'$ & Definition~\ref{def:store_eq_mod_xs} \\
    \coqinline{binds} & $\overline{x} \subseteq \coqA{\vfDom}{\sigma}$ & Definition~\ref{def:binds}
  \end{tabularx}
\end{center}

\subsection*{\coderef{translate_expr.v}}

This module implements a number of functions to translate \vfcx\ expressions
into Coq data types such as $\coqProp$. Translation differs from evaluation in
that no side conditions (such as checks for division by zero) are generated.

\begin{center}
  \begin{tabularx}{\textwidth}{p{54mm}p{34mm}X}
    \textbf{Name in Coq module} & \textbf{Concise} & \textbf{Introduced} \\
    \hline

    \coqinline{translate_expr_to_Prop_cps} & $\coqAAA{\vfTranslateProp}{e}{C}{\sigma}$ & Definition~\ref{def:translate_expr_to_Prop_cps}
  \end{tabularx}
\end{center}

\subsection*{\coderef{eval_expr.v}}

Module \coderef{eval_expr.v} implements functions for the evaluation of \vfcx\
expressions into Coq data types such as $\coqProp$. Side conditions (such as
checks for division by zero) are generated whenever needed. (However, CPS
versions of these functions are found in \coderef{symexec.v} because they are
uniquely related to symbolic execution.)

\begin{center}
  \begin{tabularx}{\textwidth}{p{54mm}p{34mm}X}
    \textbf{Name in Coq module} & \textbf{Concise} & \textbf{Introduced} \\
    \hline

    \coqinline{eval_expr_to_Z} & $\coqAAA{\vfEvalZ}{e}{C}{\sigma}$ & Definition~\ref{def:eval_Z_bool} \\
    \coqinline{eval_expr_to_bool} & $\coqAAA{\vfEvalBool}{e}{C}{\sigma}$ & Definition~\ref{def:eval_Z_bool}
  \end{tabularx}
\end{center}

\subsection*{\coderef{symexec.v}}

Module \coderef{symexec.v} implements everything related to symbolic execution
in Coq: (CPS-based) evaluation of expressions and functions constructing the SEP
for execution of a statement and a function. It also provides a number of useful
lemmas for working with SEPs in proofs.

\begin{center}
  \begin{tabularx}{\textwidth}{p{54mm}p{34mm}X}
    \textbf{Name in Coq module} & \textbf{Concise} & \textbf{Introduced} \\
    \hline

    \coqinline{eval_expr_to_Z_cps} & $\coqAAA{\vfEvalZCps}{e}{C}{\sigma}$ & Definition~\ref{def:eval_Z_cps} \\
    \coqinline{eval_expr_Prop_cps} & $\coqAAA{\vfEvalPropCps}{e}{C}{\sigma}$ & Definition~\ref{def:eval_Prop_cps} \\
    \coqinline{produce} & $\coqAAA{\vfProduce}{e}{C}{\sigma}$ & Definition~\ref{def:produce_consume} \\
    \coqinline{consume} & $\coqAAA{\vfConsume}{e}{C}{\sigma}$ & Definition~\ref{def:produce_consume} \\
    \coqinline{leak_check} & $\coqA{\vfLeakCheck}{\sigma}$ & Definition~\ref{def:leak_check} \\
    \coqinline{exec_stmt} & $\coqAAAA{\vfSymExecStmt}{s}{C_N}{C_R}{\sigma}$ & Definition~\ref{def:sym_exec_stmt} \\
    \coqinline{ret} & $\coqAAAA{\vfRet}{\sigma_0}{C}{z}{\sigma}$ & Definition~\ref{def:ret} \\
    \coqinline{exec_func_correct} & $\coqAAAA{\vfSymExecFunc}{\overline{x}}{e_p}{s}{e_q}$ & Definition~\ref{def:sym_exec_func}
  \end{tabularx}
\end{center}

\subsection*{\coderef{cbsem.v}}

Module \coderef{cbsem.v} implements the \cbsem\ operational semantics itself,
together with some useful lemmas.

\begin{center}
  \begin{tabularx}{\textwidth}{p{54mm}p{34mm}X}
    \textbf{Name in Coq module} & \textbf{Concise} & \textbf{Introduced} \\
    \hline

    \coqinline{outcome} & $o \in \vfOut$ & Definition~\ref{def:outcome} \\
    \coqinline{exec_stmt} & $\vfCbsemExecStmt{\sigma}{s}{\sigma'}{o}$ & Definition~\ref{def:cbsem_exec_stmt} \\
    \coqinline{execinf_stmt} & $\vfCbsemExecinfStmt{\sigma}{s}$ & Definition~\ref{def:cbsem_execinf_stmt} \\
    \coqinline{exec_stmt_derivable} & $\vfCbsemExecStmtDerivable{\sigma}{s}$ & Definition~\ref{def:exec_stmt_not_derivable} \\
    \coqinline{exec_func_correct} & $\coqAAAA{\vfCbsemFunc}{\overline{x}}{e_p}{s}{e_q}$ & Definition~\ref{def:cbsem_exec_func_correct}
  \end{tabularx}
\end{center}

\subsection*{\coderef{symexec_cbsem.v}}

Module \coderef{symexec_cbsem.v} implements the first soundness result which
proves the soundness of symbolic execution with respect to \cbsem.

\begin{center}
  \begin{tabularx}{\textwidth}{p{54mm}p{34mm}X}
    \textbf{Name in Coq module} & \textbf{Concise} & \textbf{Introduced} \\
    \hline

    \coqinline{symexec_stmt_sound_termination} & & Lemma~\ref{lem:symexec_stmt_sound_termination} \\
    \coqinline{symexec_stmt_sound_divergence} & & Lemma~\ref{lem:symexec_stmt_sound_divergence} \\
    \coqinline{symexec_stmt_sound} & & Lemma~\ref{lem:symexec_stmt_sound} \\
    \coqinline{symexec_func_sound} & & Theorem~\ref{thm:symexec_func_sound}
  \end{tabularx}
\end{center}

\subsection*{\coderef{cbsem_clight.v}}

Module \coderef{cbsem_clight.v} provides the structures needed to express the
correspondence between a \vfcx\ program and a Clight program. Using these
structures, it then provides a number of lemmas, ending with the second
soundness theorem which proves soundness of \cbsem\ with respect to CompCert's
Clight for a subset of \vfcx.

\begin{center}
  \begin{tabularx}{\textwidth}{p{54mm}p{34mm}X}
    \textbf{Name in Coq module} & \textbf{Concise} & \textbf{Introduced} \\
    \hline

    \coqinline{st_tenv_rel} & $\vfclCoveredBy{\sigma}{\clTenv}$ & Definition~\ref{def:st_tenv_rel} \\
    \coqinline{expr_equiv_int} & $\vfclEquivInt{e}{e_\CL}$ & Definition~\ref{def:expr_equiv_int} \\
    \coqinline{vf_cl_expr_to_int_sound} & & Lemma~\ref{lem:vf_cl_expr_to_int_sound} \\
    \coqinline{expr_equiv_b} & $\vfclEquivBool{e}{e_\CL}$ & Definition~\ref{def:expr_equiv_b} \\
    \coqinline{vf_cl_expr_to_bool_sound} & & Lemma~\ref{lem:vf_cl_expr_to_bool_sound} \\
    \coqinline{stmt_equiv} & $\vfclEquivStmt{s}{s_\CL}$ & Definition~\ref{def:stmt_equiv} \\
    \coqinline{vf_cl_exec_stmt_sound} & & Lemma~\ref{lem:vf_cl_exec_stmt_sound} \\
    \coqinline{vf_cl_execinf_stmt_sound} & & Lemma~\ref{lem:vf_cl_execinf_stmt_sound} \\
    \coqinline{simplify_vf_stmt} & $\coqA{\vfSimplifyStmt}{s}$ & Definition~\ref{def:simplify_vf_stmt} \\
    \coqinline{simplify_vf_stmt_rel} & $s \vfSimplifyStmtRel s'$ & Definition~\ref{def:simplify_vf_stmt_rel} \\
    \coqinline{simplify_vf_stmt_rel_intro} & & Lemma~\ref{lem:simplify_vf_stmt_rel_intro} \\
    \coqinline{simplify_vf_stmt__preserves__termination} & & Lemma~\ref{lem:simplify_vf_stmt__preserves__termination} \\
    \coqinline{simplify_vf_stmt__preserves__divergence} & & Lemma~\ref{lem:simplify_vf_stmt__preserves__divergence} \\
    \coqinline{programify_vf_stmt} & $\coqA{\vfProgramifyStmt}{s}$ & Definition~\ref{def:programify_vf_stmt} \\
    \coqinline{programify_vf_stmt__preserves__return} & & Lemma~\ref{lem:programify_vf_stmt__preserves__return} \\
    \coqinline{programify_vf_stmt__preserves__divergence} & & Lemma~\ref{lem:programify_vf_stmt__preserves__divergence} \\
    \coqinline{prog_equiv} & $\vfclEquivProg{s}{p_\CL}$ & Definition~\ref{def:prog_equiv} \\
    \coqinline{cbsem_exec_prog_correct} & $\coqA{\vfCbsemProg}{s}$ & Definition~\ref{def:cbsem_exec_prog_correct} \\
    \coqinline{cbsem_func_sound} & & Lemma~\ref{lem:cbsem_func_sound} \\
    \coqinline{compcert_exec_prog_correct} & $\coqA{\clExecProg}{p_\CL}$ & Definition~\ref{def:compcert_exec_prog_correct} \\
    \coqinline{vf_cl_sound} & & Theorem~\ref{thm:vf_cl_sound}
  \end{tabularx}
\end{center}

\subsection*{Instrumenting VeriFast to generate certificates}

The main export code can be found in the OCaml module \coderef{src/coq.ml}. Our
exporting code depends on constructing a \texttt{recorder} object when setting
up a verification run (see function \texttt{verify\_program} in
\coderef{src/verifast.ml}). For the duration of the actual verification, this
recording object is available everywhere in the code.

We then added recorder calls at various strategic points in the VeriFast code
base. These calls correspond to high level events such as the SMT solver making
assumptions and checking assertions, fresh symbol picking, pushing and popping
SMT solver contexts and branching.

This intermediate recording tree is folded to produce a tree with nodes
corresponding to the four basic types of proof steps discussed in Subsection
\ref{subsection:proving_sep}: \emph{assumptions} and \emph{assertions} (for
calls to the SMT solver); \emph{conjunctions} (for branching the execution) and
a node expressing that an execution branch was terminated due to the presence of
a \emph{contradictory path condition}. This final tree is then used, together
with the program's AST, to export the certificate (see function \texttt{verify}
in \coderef{src/vfconsole.ml}).

% % % % % % % % % % % % % % % % % % % % % % % % % % % % % % % % % % % % % % % %

\section{Full listing for \texttt{test\_vf.v}}
\label{appendix:listings}

\coqlistinglabel{sources/test_vf.v}{The full listing for \coderef{test_vf.v}, the script exported by VeriFast for the example in Section \ref{section:example}.}{lst:test_vf.v}

\end{document}